\documentclass[11pt,a4paper]{article}

\usepackage[english]{babel}
\usepackage[letterpaper,margin=1in]{geometry}

\usepackage[T1]{fontenc}
\usepackage{libertine}

\usepackage{setspace}
\usepackage{array}
\usepackage{multirow}
\usepackage{graphicx}
\graphicspath{{images/}}

\usepackage{subcaption}
\usepackage{caption}

\usepackage{amsthm}
\usepackage{amsmath}
\usepackage{amssymb}

\usepackage{xcolor}
\usepackage{color}
\usepackage{comment}
\usepackage{soul}
\usepackage[normalem]{ulem}

\usepackage{microtype}
\usepackage{lineno}

\usepackage{hyperref}
\hypersetup{
    colorlinks=true,
    allcolors=blue
}

\usepackage{authblk}

\newtheorem{theorem}{Theorem}[section]
\newtheorem*{Statement 1}{\bf Statement 1}
\newtheorem*{Statement 2}{\bf Statement 2}
\newtheorem{lemma}[theorem]{Lemma}
\newtheorem{proposition}[theorem]{Proposition}

\newtheorem{Rem}{Remark}

\newtheorem{Problem}[theorem]{\bf Problem}

\title{Optimizing Line Segment Inspection with\\
Limited-Range Drones}

\author[1]{José-Miguel Díaz-Báñez}
\author[1]{José-Manuel Higes\footnote{Corresponding author: jhiges@us.es}}
\author[1]{Alina Kasiuk}
\author[1]{Inmaculada Ventura}

\affil[1]{Department of Applied Mathematics, University of Seville, Spain}

\date{}

\begin{document}

\maketitle

\begingroup
\renewcommand{\thefootnote}{}
\footnotetext{
This work is supported in part by grants
PID2024-159968OB-I00 and TED2021-129182B-I00
funded by MCIN/AEI/10.13039/501100011033 and
the European Union Next GenerationEU/PRTR.
}
\addtocounter{footnote}{-1}
\renewcommand{\thefootnote}{\arabic{footnote}}
\endgroup

\begin{abstract} Optimization problems with drones are widely studied in a variety of civilian tasks, mainly due to their ability to traverse rough terrains and to carry cameras and other sensors for surveillance tasks. The limited battery life of these aerial robots poses challenges in operational research.
	In this paper, we address the following optimization problem.
	We are given a set of line segments (e.g. tubes in a solar plant) to inspect by drones. The objective is to detect broken pipes using artificial intelligence and path planning must be carried out efficiently. On the one hand, the limited capacity of the batteries necessitates periodic visits (tours) to a fixed base station. However, it is desirable to allocate a set of tours for each drone to ensure that the segments are covered as quickly as possible, aiming to minimize the makespan, which is the maximum time spent by any drone. We are able to prove that this optimization problem is strongly NP-hard even when the segments are positioned on a line and the scenario involves only two drones. Then,  approximation algorithms are proposed.
	Our computational experiments demonstrate that the proposed algorithm achieves near-optimal performance across diverse operational scenarios.
\end{abstract}




\maketitle

\section{Introduction}

As technology advances, unmanned aerial vehicles (UAVs), commonly referred to as drones, are assuming an ever-expanding role in the inspection of industrial structures. For example, the manual inspection of high-voltage power transmission lines, gas pipelines or solar plants are both time-consuming and expensive \cite{togola2020real,wanasinghe2020unmanned,fahmani2023unmanned}. Hence, the use of drones equipped with cameras enables efficient fault detection.

Our problem is inspired by the inspection of a particular case of Concentrated Solar Power plants (CSP), which represent a growing technology for electricity production through renewable energies. 
A CSP
plant with Parabolic Trough Collector comprises an array of receiver tubes (where the sun rays are reflected), which are subjected to high thermal stress. Figure \ref{fig:CSP1} illustrates the receiver tubes of a CSP with Parabolic Trough Collectors.
\begin{figure}[ht]
\centering
\includegraphics[scale=0.25]{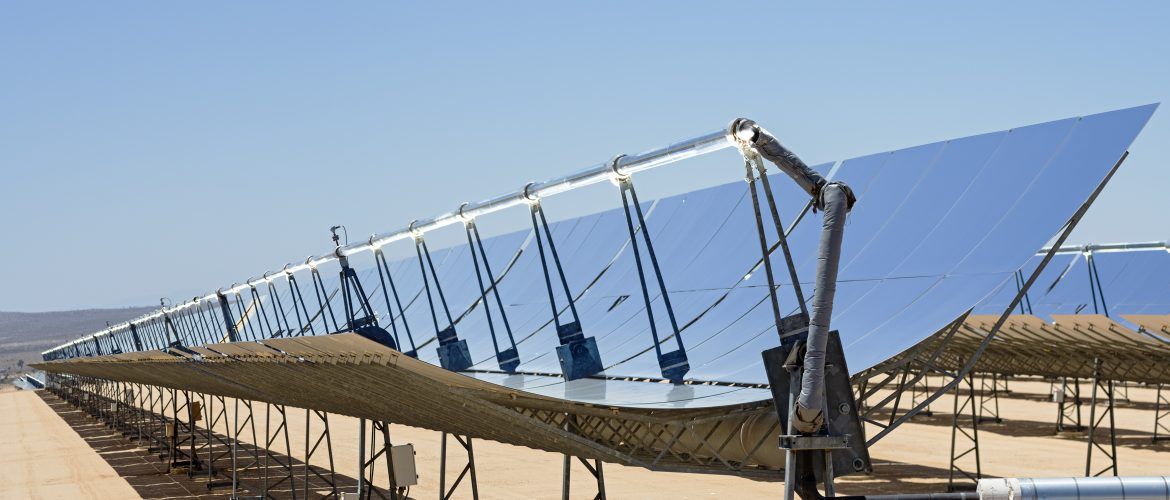}
\caption{Receiver tubes in a CSP-type solar plant.}
\label{fig:CSP1}
\end{figure}
Therefore, the identification of broken glass envelopes is crucial to maintain the proper functioning of the CSP plant \cite{prahl2013airborne,perez2023detecting}. 
In this context, promptly detecting a fault enables the company to take immediate action in the repair process. 
In this paper, we address the problem of minimizing the time required for a team of drones (the completion time or makespan), with limited battery endurance, to efficiently traverse a set of aligned tubes, represented as segments along a straight line.
This problem falls within the domain of arc routing problems (ARPs), an active area in operations research, which involves determining a set of tours with the minimum total cost while traversing a set of links (arcs or edges in a graph), named \emph{required links} \cite{macrina2020drone}. 
In \cite{campbell2018drone}, the key differences from classical non-drone ARPs are discussed. A drone has the flexibility to enter a line through any of its points, traverse a portion of that line, exit through another of its points, subsequently travel directly to any point on another line, and so forth.
Moreover, unlike the vehicles in classical ARPs, which have to follow the links of a given graph, drones can fly directly between any two points. Hence, employing drones for service in ARPs introduces substantial modifications to the conventional methods of modeling and solving these problems. The authors study the following optimization problem for a single drone. Given a depot, a set of (straight or curved) lines to traverse, and a maximum route length $L$, the objective is to partition the lines into feasible routes that start and end at the depot, minimizing the total cost. The cost of a tour includes both the Euclidean distances to reach the lines and the costs associated with traversing the required lines. Obviously, the problem is NP-hard and they present a heuristics approach to solve the problem. After that, in a recent paper \cite{campbell2021solving}, the same authors propose two heuristic methods for the generalized version for $k$ drones. In both papers, the required lines are digitized by approximating each curved line with a dense set of points, effectively representing each line as a polygonal chain. Drones are restricted to entering and exiting these lines only at the designated points along the polygonal chain, transforming the problem into a discrete optimization problem on a graph. Clearly, with this methodology, a large number of points are needed to obtain an accurate set of routes.  

An interesting question that arises here is whether the original continuous problem could be effectively solved in simpler cases, such as when the target segments lie along parallel straight lines. This scenario is relevant for applications such as fault detection in CSP plants, as well as in the inspection of pipelines and power lines.
In a recent paper \cite{bereg2024covering}, the authors provided an affirmative answer for the case where the segments are arranged in one dimension. They developed a polynomial-time algorithm to solve the one-drone/one-line problem, which involves computing a set of tours for a single drone. Given a set of segments positioned along a straight line and a base station located off the line, the algorithm finds tours with a maximum length $L$ that start and end at the base. The goal is to minimize the total time required for the drone to cover all segments.
The drone has limited battery endurance, maintains a constant flying speed, and returns to a base station when its battery is running low. The time spent at the station is disregarded, as it is assumed that battery changes are conducted during this period. Thus, the total time is defined as the sum of the lengths of the necessary tours.  
Additionally, they addressed the problem of minimizing the number of tours required to ``cover" all segments, rather than minimizing the total length. Notably, these two optimization objectives lead to different solutions.

In this paper, we prove that the two-drone/one-line version cannot be solved in polynomial time, unless P = NP. Moreover, we prove that the problem is Strongly NP-Hard and, as a result, no pseudo-polynomial algorithm can exist for solving them unless P = NP. After that, we propose a simple approximation algorithm and demonstrate its effectiveness through both theoretical analysis and experimental evaluation. Our approach uses the exact solution for the one-drone/one-line problem from \cite{bereg2024covering}.
The relevance of solving this problem is two-fold:
from a practical perspective,
minimizing the maximum distance (time) traveled by a team of drones, -essentially, reducing the time required for the team to cover the segments- is meaningful when the company aims to expedite the task and promptly repair the broken tubes. On the other hand, from a methodological point of view, designing  straightforward yet effective algorithms in one dimension can be used to solve more complex instances, for instance,  multiple segments on parallel lines, a structure commonly found in solar power plants. Currently, the company uses a line-by-line approach for drone flights, though not optimized for each line. This bottom-up methodology—starting with efficient exact algorithms for simpler cases and using these as building blocks to design effective approximations for more complex instances—could also prove valuable for tackling other challenging optimization problems in operations research.

The remainder of this paper is organized as follows. In Section \ref{sec:rw}, we discuss various related problems tackled within the area of UAV optimization.
In Section \ref{sec:prob}, we present the formal definition of the problems as well as some interesting remarks. The NP-hardness proof is outlined in Section \ref{section:Np-Hard}, while Section \ref{section: Aproximation algorithm} contains a constant-factor approximation. A MILP formulation and the computational results are described in Section \ref{sec:comp}. Finally, some conclusions and future research directions are drawn in Section~\ref{sec:conclu}.

\section{Related Work}\label{sec:rw}

As mentioned earlier, this work is part of a growing body of research focused on UAV flight optimization algorithms for various civil applications. The challenges in these contexts are diverse: optimizing UAVs locations to minimize their interference \cite{Arribas2020}, determining the positions of the bases to minimize the number of UAVs to cover a region \cite{Chung2021,bereg2022optimal}, or finding the optimal route for minimizing time or other cost functions. For a comprehensive review on optimization models and methods see \cite{CHUNG2020} and \cite{dukkanci2024facility} for drone delivery problems. 

There is a vast body of work on path planning, where drones are typically required to visit only the nodes of a graph, often formulating the problem as a variant of the Traveling Salesman Problem (TSP), such as the Asymmetric Salesman Problem in \cite{Sundar2014} or the Multi-Tour Salesman Problem in \cite{Nekovar2021}. Multi-depot vehicle routing problems for multiple unmanned aerial vehicle operations have been recently addressed using payload capacity constraints  allowing for multiple deliveries during a single trip \cite{uhm2022vehicle,xia2023joint}.
Studies more closely aligned with ours have been inspired by the inspection of power transmission lines \cite{Cui2017,Nekovar2021}. These studies utilize arc routing and path planning methodologies.
However, their problem differs from ours in that, 
the drone covers each segment in a single tour before returning to recharge. 
In contrast, our approach enables segments to be inspected over multiple tours, with UAVs having the flexibility to enter and exit a segment at any point. This results in a more efficient overall coverage.
Allowing drones to enter and exit the lines at any point has also been considered in recent works, such as \cite{campbell2018drone} and \cite{campbell2021solving}. This condition results in an interesting continuous optimization problem with infinitely many feasible solutions. As mentioned above, in these works the lines (whether straight or curved) to be covered are modeled as polygonal chains, with drones allowed to enter and exit each segment only at the waypoints of the polygonal chain. In this way, the drones can travel directly between any two waypoints and a complete graph is generated in which  costs are given by the Euclidean distances. In both papers, the authors present mathematical formulations and propose competitive heuristics.

Typically, algorithms derived from mathematical programming formulations are used in this area \cite{CAVANI2021}.
In fact, it was proved that this type of problems cannot be approximated by constant factors 
with polynomial-time algorithms \cite{Morandi2023}. 
However,
for the one-dimensional case, the authors in \cite{bereg2024covering} proved that minimizing the flight time to cover segments on a line with tours of limited length and a single drone can be solved in polynomial time. This raises the question of whether the problem remains polynomially solvable for two or more drones. In this paper, we prove that it does not and present efficient approximation algorithms.

Additionally, research on drone flights for solar plant inspections has recently gained attention, as in \cite{Jemmali2023}. Similar to our approach, their algorithms also aim to minimize the flight time of multiple drones for inspecting solar power plants, but the problem is framed as a scheduling issue, with pre-established drone routes and a focus on battery decay constraints. In contrast, our problem does not assume predefined tours; instead, we focus on finding the optimal set of routes to traverse the given line segments.

\section{Problems Statement}\label{sec:prob}
Let $\alpha$
be a set of disjoint line segments on a straight line. 
We are given a team of $k$ identical drones that must traverse all the segments in $\alpha$. The drones are constrained by finite battery endurance, maintain uniform velocity during flight, and execute takeoff and battery recharging protocols at a fixed point $B$,  the designated base station located outside the line. We assume that the recharging time is negligible as the batteries are replaced instantly.
Let $L > 0$ denote the maximum distance a drone can travel when initiating and concluding its trajectory (tour) at the fixed point $B$. We consider, without loss of generality, that the orthogonal projection of $B$ to the line is the origin of coordinates $O$. 
For a tour labeled as $t$, we refer to its length as $\ell(t)$. The length of a collection of tours, denoted as $T = \{t_1,\cdots, t_m\}$, is represented as $\ell(T)$. It is calculated as the sum of the lengths of individual tours, that is, $\sum_{i=1}^m \ell(t_i)$. Our objective is to determine a set of tours for each drone in such a way that the maximum length traveled by any drone (the makespan) is minimized while ensuring that all segments are covered. Formally,
this \emph{Minmax problem} for $k$ drones can be stated as follows:
\begin{Problem}\label{Problem: Minmax-k}
$k$-Min-Makespan Problem:
Given a set $\alpha$ of $n$ line segments on  a straight line and $k$ drones, compute a set of tours with maximal length $L$  for each drone, $T_1,T_2, \cdots, T_k$, such that:
\begin{eqnarray}
\notag
\alpha \subset T_1\cup T_2\cup \cdots \cup T_k \,\, \text{\emph{and,}}
\max\limits_{j=1,\cdots,k} 
\ell(T_j) \, \text{\emph{is minimized}}.
\end{eqnarray}
\end{Problem}

See Figure~\ref{fig:definition} for an example of a set of tours covering a set of line segments with 2 drones.

\begin{figure}[ht]
\centering
\includegraphics[scale=0.6]{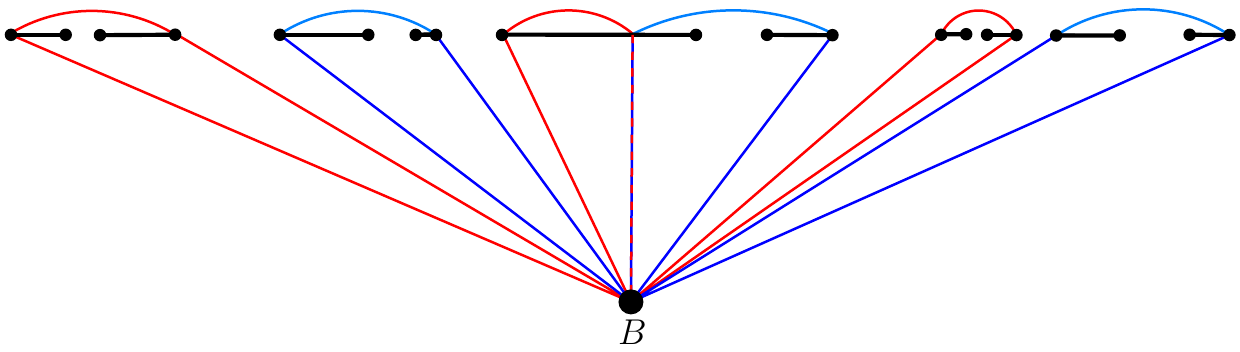}
\caption{Covering tours for two drones.
Blue and red tours correspond to different drones. }
\label{fig:definition}
\end{figure}

\begin{Rem}
We can consider a version of the previous problem when $k = 1$, i.e. minimizing the total length of a set of tours performed by a single drone. Following \cite{bereg2024covering}, we will call such problem \emph{Minsum problem}. 
In \cite{bereg2024covering} it was proved that such a problem can be solved in polynomial time.
\end{Rem}
 
The problem of minimizing the total length by a set of $k$ drones can be trivially solved by allowing one drone to perform all  tours while the others remain inactive.\par
Also, the problem of minimizing the maximal number of tours (instead of total length) performed by the $k$ drones can be easily addressed using the approach of \cite{bereg2024covering}. 

\begin{Problem}\label{Problem: Mintour}
Mintour-$k$: Given a set $\alpha$ of $n$ line segments on  a straight line and $k$ drones, compute a set of tours for each drone, $T_1,T_2, \cdots, T_k$, such that:
\begin{eqnarray}
\notag
\alpha \subset T_1\cup T_2\cup \cdots \cup T_k \,\, \text{\emph{and,}}
\max\limits_{j=1,\cdots,k} 
|T_j| \, \text{\emph{is minimized}}.
\end{eqnarray}
\end{Problem}

\begin{proposition}
    Mintour-$k$ can be solved in polynomial time.
\end{proposition}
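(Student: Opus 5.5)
The plan is to reduce Mintour-$k$ to the single-drone problem of minimizing the \emph{number} of tours, which \cite{bereg2024covering} solves in polynomial time. Let $N^*$ be the minimum number of tours of length at most $L$ whose union covers all segments of $\alpha$; by \cite{bereg2024covering}, $N^*$ and an optimal family of $N^*$ tours can be computed in polynomial time. I claim that the optimal value of Mintour-$k$ is $\lceil N^*/k\rceil$.

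For the lower bound, take any feasible solution $T_1,\dots,T_k$ of Mintour-$k$. The union $T_1\cup\cdots\cup T_k$ is a family of $\sum_j |T_j|$ tours, each of length at most $L$, that covers $\alpha$. Hence $\sum_j|T_j|\ge N^*$, and therefore $\max_j|T_j|\ge \lceil N^*/k\rceil$.

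For the upper bound, take the optimal family of $N^*$ tours and distribute it among the $k$ drones in round-robin fashion. Each drone then receives at most $\lceil N^*/k\rceil$ tours, and together the drones still cover every segment. Since the drones are identical and independent, and no constraint couples tours of different drones (no total-length budget, no synchronization), this assignment is feasible. Computing $N^*$ is polynomial by \cite{bereg2024covering}, and the round-robin assignment takes $O(N^*)$ time. Note that $N^*$ is polynomially bounded, since the cited algorithm outputs the tours explicitly.

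The main point to verify is that the single-drone min-number-of-tours result of \cite{bereg2024covering} applies to the same tour model used here: tours start and end at $B$, have length at most $L$, and may enter and exit segments at arbitrary points. I also need that cardinality is the only cost, so splitting the family among drones changes nothing. If the cited result were stated only for the minsum objective, I would fall back on a direct greedy argument. Assuming the farthest uncovered portion must be covered, I would cover it with a maximal-length tour toward $O$ on each side, and prove optimality by an exchange argument.
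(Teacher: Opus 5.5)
Your argument is the paper's own: compute a minimum-cardinality single-drone cover with the greedy of \cite{bereg2024covering}, split it evenly among the $k$ drones, and get optimality because the union of any $k$-drone solution is itself a single-drone cover. You state this lower bound explicitly, whereas the paper only says the one-drone solution ``could be improved'' otherwise.

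One correction: drop the claim that $N^*$ is polynomially bounded. It is not even bounded in terms of $n$. Take a single segment and $B$ at distance $h$ close to $L/2$ from the line. Every tour covers at most $L-2h$ of the line, so at least $\sqrt{(L/2+h)/(L/2-h)}$ tours are needed. This is why the paper states the running time output-sensitively as $O(n+m)$.
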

\begin{proof}
    Using the greedy approach  of \cite{bereg2024covering}, compute  
    in polynomial time a set of tours $T = \{t_1, \dots, t_m\}$ of minimum cardinality to cover  $n$ segments with just one drone. Distribute the tours of $T$ evenly among each of the drones. Notice that this solution is optimal for Problem \ref{Problem: Mintour}, otherwise we could improve the solution $T$ for one drone. Thus,  $\lceil\frac{m}{k}\rceil$ is the maximum cardinality of tours for each drone and the problem can be solved in $O(n+m)$ time.   
\end{proof}
In this paper, we prove that transitioning from the Minsum problem 
to the Minmax criterion for two drones results in an NP-hard problem. It will be clear from the proofs that our results for two drones can be extended easily for the $k$-Min-Makespan problem. 
Then,  approximation algorithms capable of obtaining the optimal solution for a large majority of scenarios are proposed.

\section{NP-Hardness}\label{sec:NP}\label{section:Np-Hard}
In this section, we will prove that the $2$-Min-Makespan problem is strongly NP-Hard by connecting it with a particular version of \emph{ Finite Partition Problem}.
\subsection{The Finite Partition Problem}\label{subsection: Finite Partition Problem}
It is known that the \emph{two way-balanced finite partition problem} is NP-complete \cite{garey1997computers}: 
\vspace{.25cm}

\emph{Given an even finite set of positive integers $S = \{s_i\}_{i=1}^{2n}$, determine if there is a subset $A\subset S$ of cardinality $n$, with $\sum_{s_i \in A} s_i= \sum_{s_j \in S\setminus A} s_j$.}
\vspace{.25cm}

Furthermore, \cite{WojtczakRationals} established that the finite partition problem for positive rationals is strongly NP-complete, from which we immediately derive: 
\begin{lemma}
    The two way-balanced finite partition problem for positive rationals is strongly NP-complete.  
\end{lemma}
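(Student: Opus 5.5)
Membership in NP is routine: given a candidate subset $A$ of size $n$, we check $|A|=n$ and compare the two rational sums in time polynomial in the encoding length. The content of the statement is therefore strong NP-hardness. The plan is a pseudo-polynomial (indeed, magnitude-preserving) reduction from the finite partition problem for positive rationals, shown strongly NP-complete in \cite{WojtczakRationals}, to the balanced version.

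Given an instance $S=\{s_1,\dots,s_m\}$ of positive rationals for unrestricted partition, I would pad it with $m$ extra elements. A simple first attempt adds $m$ copies of a tiny rational $\varepsilon$. The risk is that a balanced split of the padded set may restrict to an unbalanced split of $S$ that still has equal sums, or conversely that a valid partition of $S$ cannot be completed to one with equal cardinalities. Padding with $m$ zeros would fix the cardinality issue, but zeros are not positive. So the cleaner choice is the standard shift: replace each $s_i$ by $s_i' = s_i + M$ and add $m$ copies of $M$, where $M$ is a positive rational. Then any partition $A$ of $S$ with $|A|=a$ extends to a balanced partition of the $2m$ elements by adding $m-a$ copies of $M$ to $A'$ and $a$ copies to the complement. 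Each side then has exactly $m$ elements and total $\frac12\sum s_i + mM$. Conversely, each side of a balanced solution contains exactly $m$ elements, and each element is either $M$ or $s_i+M$. The $M$-contributions are therefore $mM$ on both sides, so the $s_i$ placed on each side sum equally, giving a partition of $S$.

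The step needing care is that the reduction preserves strong hardness. The numbers produced must have numerators and denominators bounded polynomially in those of the original instance, so that a pseudo-polynomial algorithm for the balanced problem would yield one for the rational partition problem. Choosing $M=1$, or any fixed small rational, keeps each $s_i+1$ with the same denominator as $s_i$ and a numerator increased only by that denominator. Magnitudes thus grow at most polynomially, and strong NP-hardness transfers, completing the argument.
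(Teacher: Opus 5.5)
Your proof is correct, but it takes a different route from the paper.

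The paper first appends a $0$ to make $|S|=2n$ even. It then reduces one rational partition instance to $n$ balanced instances $S_j$, obtained by padding with $2j$ zeros for $j=0,\dots,n-1$, one for each possible cardinality imbalance of a partition of $S$. It answers yes iff some $S_j$ is a yes-instance. This is a disjunctive Turing reduction, not a many-one one. Also, as you point out yourself, zeros are not positive, so strictly speaking the $S_j$ are not instances of the positive-rational version.

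Your shift $s_i\mapsto s_i+M$, together with $m$ copies of $M$, does better on several counts:
\begin{itemize}
\item it absorbs every cardinality imbalance inside a single instance;
\item it keeps all numbers strictly positive;
\item it needs no parity fix, since the new instance always has $2m$ elements;
\item it gives a genuine many-one reduction, which is the kind of reduction under which strong NP-completeness is normally defined.
\end{itemize}

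The paper's padding does have one advantage: the original numbers are left untouched, so no encoding argument is needed. You pay a small price there, and you handle it correctly. With $M=1$, $p_i/q_i$ becomes $(p_i+q_i)/q_i$, which is still in lowest terms. Denominators are unchanged and numerators at most double, so unary-bounded instances map to unary-bounded instances.

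You also verify membership in NP, which the paper's proof leaves implicit.
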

\begin{proof}
    Let $S = \{s_i\}_{i= 1}^{r}$  be a set of positive rationals. Without loss of generality, we can assume $r = 2n$, otherwise we can consider the set $S' = \{s_i'\}_{i=1}^{2n}$ with $s_i' = s_i$ if $i < 2n$ and $s_{2n} = 0$. Now the finite partition problem for $S$ with $|S| = 2n$ can be decomposed in $n$ two way-balanced finite partition problems applied to the $n$ sets  $S_j$ from  $j \in \{0, \cdots, n-1\} $ where $S_j = \{s_i'\}_{i=1}^{2n+2j}$ where $s_i' = s_i$ if $i\le 2n$ and $s_i' = 0$, otherwise. It is clear that if there exists a solution for some $S_j$ to the two-way balanced finite partition problem then there is a solution for the finite partition problem for $S$. Reciprocally, if there is no solution to the finite partition problem for $S$ then there is no solution for any $S_j$ to the two-way balanced finite partition problem. 
\end{proof}

\vspace{.25cm}
We are interested in the optimization version of the problem: 
\begin{Problem}\label{Problem: balanced min max finite partition problem}
Minmax partition problem: Given a finite set of positive integers $S = \{s_i\}_{i=1}^{2n}$, determine 
    a subset $A\subset S$ of cardinality $n$, with $M_2 = \sum_{s_i \in A} s_i\ge M_1= \sum_{s_j \in S\setminus A} s_j$, 
    such that $M_2$ is minimum for all possible subsets of $S$ of cardinality $n$.
\end{Problem}

\begin{lemma}\label{lemma: NP-hardpartition for minmax} Problem \ref{Problem: balanced min max finite partition problem} is Strongly NP-hard. 
\end{lemma}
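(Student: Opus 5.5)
We prove Lemma \ref{lemma: NP-hardpartition for minmax} by a direct reduction from the decision problem of the previous lemma, the two way-balanced finite partition problem, which is strongly NP-complete. Take an instance $S=\{s_i\}_{i=1}^{2n}$ and set $\Sigma=\sum_{i=1}^{2n}s_i$. The key observation is the following equivalence: $S$ is a yes-instance if and only if the optimal value $M_2^*$ of Problem \ref{Problem: balanced min max finite partition problem} on the same set $S$ equals $\Sigma/2$. Consequently, any algorithm solving the optimization problem also decides the balanced partition problem, at the cost of one comparison.

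The equivalence follows in two steps.
\begin{itemize}
\item Every feasible $A$ with $|A|=n$ satisfies $M_1+M_2=\Sigma$ and $M_2\ge M_1$, hence $M_2\ge \Sigma/2$.
\item If a balanced $A$ of cardinality $n$ exists, it attains $M_2=\Sigma/2$, which is therefore optimal. Conversely, if some optimal $A$ has $M_2=\Sigma/2$, then $M_1=M_2$, so $A$ is a balanced partition of size $n$.
\end{itemize}
One detail needs care. Problem \ref{Problem: balanced min max finite partition problem} requires $M_2\ge M_1$ for the chosen $A$. For any $n$-subset, either it or its complement (also of cardinality $n$) satisfies this, so the feasible set is nonempty and the optimum is well defined.

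Strong NP-hardness transfers because the reduction is the identity on the instance. The numbers are unchanged, so their magnitudes stay polynomially bounded in the input size. Concretely, a pseudo-polynomial algorithm for the optimization problem would yield a pseudo-polynomial algorithm for the strongly NP-complete decision problem, which forces P $=$ NP.

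The main obstacle is a mismatch in number types. The earlier lemma gives strong NP-completeness only for positive rationals, while Problem \ref{Problem: balanced min max finite partition problem} is stated for positive integers. The first remedy to try is to work over rationals throughout, noting that the argument above never uses integrality. If integers are insisted upon, the alternative is to clear denominators by multiplying by their lcm. That step is not pseudo-polynomial in general, so the cleaner route is either to cite the strong NP-completeness of the integer version directly from \cite{garey1997computers} (e.g.\ via 3-Partition-type arguments with bounded cardinality) or to state the lemma for rationals as well. A related subtlety is the zero padding in the earlier proof, which conflicts with positivity. If positivity is required, replace the added zeros by a common shift $s_i+c$, which preserves balanced partitions of equal cardinality.
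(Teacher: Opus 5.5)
Your argument is correct and essentially the paper's. The paper shows that a minimizer of $M_2$ also minimizes $M_2-M_1=2M_2-\Sigma$, then tests whether this difference is zero; that is exactly your comparison of $M_2^*$ with $\Sigma/2$, and the paper likewise works with positive rationals although Problem~\ref{Problem: balanced min max finite partition problem} is phrased for integers.

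One of your suggested fallbacks would fail. The equal-cardinality two-way partition problem over integers has a pseudo-polynomial dynamic program over (number of chosen elements, partial sum), so it is not strongly NP-complete unless P $=$ NP. It therefore cannot be cited as such from \cite{garey1997computers} or obtained from 3-Partition, and the rational reading is the only one under which the lemma holds.
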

\begin{proof}
Given a finite set of positive rationals $S = \{s_i\}_{i=1}^{2n}$, suppose that we want to solve the two-way balanced finite partition problem for $S$ and we have an algorithm in polynomial time that solves Problem \ref{Problem: balanced min max finite partition problem}. Then we have a subset $A\subset S$, with $M_2 = \sum_{s_i \in A} s_i$ and $M_1= \sum_{s_j \in S\setminus A} s_j$, $M_2 \ge M_1$, such that $M_2$ is minimum for all possible subsets $A$ of cardinality half of the cardinality of $S$. Assume that $M_2-M_1$ is not minimum. Then there is $M_2'$ and $M_1'$ for another partition $A'$ such that $M_2'-M_1' < M_2-M_1$. By the minimality of $M_2$, we have $M_2'- M_2\ge 0$. Also, we have $M_2 + M_1 = M_2'+M_1'$, therefore $M_1-M_1' \ge 0$. Hence $M_2-M_1 > M_2'-M_1' \ge M_2-M_1$, a contradiction. Therefore, $M_2-M_1$ is the minimum. By minimality, if $M_2-M_1 \ne 0$ then there is no solution for the two-way balanced finite partition problem and, if $M_2 -M_1 = 0$, we have found a solution.  
\end{proof}

The following result will serve as a crucial tool in our construction to prove the Strongly NP-hardness of the $2$-Min-Makespan problem: 
\begin{proposition}\label{Proposition: KC-proposition}
    \emph {For a set of positive numbers $S = \{s_i\}_{i=1}^{2n}$ 
    and two positive constants $K, C >0$, define the set $S' = \{s_i'\}_{i= 1}^{2n}$ with $s_i' = K s_i + C$. Then a subset $A \subset S$ is a solution to Problem \ref{Problem: balanced min max finite partition problem} for $S$ if and only if the subset $A'$, defined by $s_i' \in A'$ with $s_i \in A$, is a solution to Problem \ref{Problem: balanced min max finite partition problem} for the set $S'$.}
\end{proposition}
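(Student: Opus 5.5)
The key observation is that the map $s\mapsto Ks+C$ preserves the feasible set and transforms the objective of Problem~\ref{Problem: balanced min max finite partition problem} by a fixed increasing affine map. I will use the bijection $\varphi\colon S\to S'$, $\varphi(s_i)=s_i'$, and write $A'=\varphi(A)$. Treat the elements as indexed, so repeated values cause no trouble. Then $A\mapsto A'$ is a bijection between subsets of $S$ of cardinality $n$ and subsets of $S'$ of cardinality $n$. So it suffices to compare the two objective functions on corresponding subsets.

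For a subset $A$ of cardinality $n$, write $\Sigma(A)=\sum_{s_i\in A}s_i$. Because $|A|=n$, we get
\[
\Sigma(A')=\sum_{s_i\in A}(Ks_i+C)=K\,\Sigma(A)+nC .
\]
The complement $S\setminus A$ also has $n$ elements, so similarly $\Sigma(S'\setminus A')=K\,\Sigma(S\setminus A)+nC$. The map $x\mapsto Kx+nC$ is strictly increasing because $K>0$. Two consequences follow:
\begin{itemize}
\item The condition $M_2\ge M_1$, i.e.\ $\Sigma(A)\ge\Sigma(S\setminus A)$, holds if and only if $\Sigma(A')\ge\Sigma(S'\setminus A')$. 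So feasibility is preserved in both directions.
\item The objective values satisfy $M_2'=KM_2+nC$.
\end{itemize}
The balanced cardinality constraint is exactly what makes the additive constant $C$ contribute the same amount $nC$ to both sides.

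Now suppose $A$ is optimal for $S$, and let $B'$ be any feasible subset for $S'$. Then $B=\varphi^{-1}(B')$ is feasible for $S$, so $\Sigma(A)\le\Sigma(B)$. Applying the increasing map gives $\Sigma(A')\le\Sigma(B')$, so $A'$ is optimal for $S'$. The converse is symmetric: use the inverse map $x\mapsto (x-nC)/K$, which is also increasing.

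An equivalent viewpoint is that the objective $M_2$ equals $\max(\Sigma(A),\Sigma(S\setminus A))$. This maximum also transforms by $x\mapsto Kx+nC$, because an increasing map commutes with $\max$.

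I do not expect a real obstacle. The only point needing care is the tie-breaking convention: when $\Sigma(A)=\Sigma(S\setminus A)$, either half may be labelled $A$. Since the transformation preserves equalities as well as strict inequalities, this convention also corresponds exactly under $\varphi$.
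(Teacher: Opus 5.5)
Your proof is correct and follows essentially the same route as the paper's: both rely on the cardinality-$n$ constraint making the shift contribute $nC$ to each side, so the increasing map $x\mapsto Kx+nC$ (with $K>0$) preserves feasibility ($M_2\ge M_1$) and the ordering of objective values. You spell out the bijection between $n$-subsets and the converse direction more explicitly than the paper, which states the forward direction ``by linearity'' and leaves the reverse as similar.
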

\begin{proof} 
Assume $A \subset S$ is a solution of Problem \ref{Problem: balanced min max finite partition problem}; then  $ M_2 = \sum_{i \in A} s_i$ is minimum from all possible subsets of $S$ of cardinality $n$ with $M_2 \ge M_1 = \sum_{i \in S\setminus A} s_i$. By linearity, this will imply that the sum $M_2' = K \sum_{i \in A} s_i + C\cdot n$ is minimum from all possible subsets of $S$ of cardinality $n$ with $M_2' \ge M_1' = K\sum_{i \in S\setminus A} s_i + C\cdot n$. Thus, we can conclude that the set $A'$ is a solution to Problem \ref{Problem: balanced min max finite partition problem}.
The reverse implication can be proven in a similar manner.
\end{proof}

\subsection{The construction}\label{section: Construction}
Given a set of positive rationals $S = \{s_i\}_{i= 1}^{2n}$, we will construct a set of segments $\alpha_S$ such that solving the $2$-Min-Makespan problem for $\alpha_S$ will be equivalent to solving Problem \ref{Problem: balanced min max finite partition problem} for $S$. \par
We consider a horizontal line and a constant $L > 0$. Let $B$ be a point at a distance $L/2$ from the line, and let $O$ (the origin of coordinates) be the orthogonal projection of $B$ onto the line.
Furthermore, let $C$ be a point on the horizontal line to the right of $O$, at distance $L/2$ from $B$.
(see Figure \ref{fig:fig14}). Let $S=\{s_i\}_{i = 1}^{2n}$ be a set of positive rationals and assume, without loss of generality, $R=\max_i\{s_i\}=s_{2n}$.  Take $\epsilon$ a small number less than $1/3$. We will say that a point $x$ is a \emph{right-hand point} if it lies on the horizontal line to the right of $O$, at distance $x$. The symmetric point of $x$ with respect to the coordinate axis, noted by $-x$, is a \textit{left-hand point}. 
%
The following steps allow us to construct $\alpha_S$ in such a way that the endpoints of the segments in $\alpha_S$ are either right-hand or left-hand points, i.e.,  $\alpha_S=\left\{a_1, \cdots, a_{2n}\right\}$, where each segment $a_i$ takes the form $[x_i, y_i]$ or $[-y_i, -x_i]$ 
with  $x_i, y_i > 0$ for each $i=1,\cdots, 2n$. Additionally, the inequality $|x_1|<|y_1|<|x_2|<|y_2|<\dots <|x_{2n}|<|y_{2n}|$ must hold.

\begin{itemize}
\item {\emph{Step 1:}} For $i=0,\cdots,2n$, determine a sequence of right-hand points $y_i$, with $c_i = d(y_i, B)$, 
that satisfies:
\begin{equation}
c_0 > \max \left(\frac{L}{2}- \frac{\epsilon L}{4n}, L-y_0-\frac{L}{3}, \frac{\sqrt{5}L}{6}\right),
\end{equation} 
\begin{equation} 
y_{i+1}-y_i+c_i+c_{i+1} = L,
\end{equation}
\begin{equation}
c_{i+1}^2 =  y_{i+1}^2+\left(\frac{L}{3}\right)^2.
\end{equation}
\begin{figure}[ht]
    \centering
    \includegraphics[width=0.8\textwidth,page=4]{images/fig1.pdf}
    \caption{Diagram for Step 1. Right-hand points $y_i$.}
    \label{fig:fig14}
\end{figure}
\item {\emph{Step 2:}} Determine the number sequence: 
\begin{equation}
\notag
    \{s_i' = Ks_i + 2c_{2n}\}_{i=1}^{2n}\\
    \text{ with }
    K = \frac{L-2c_{2n}}{R}.
\end{equation}
\item {\emph{Step 3:}} For $i=1,\cdots,2n$, determine the sequence of right hand points $x_i$, with $b_i = d(x_i, B)$
that satisfies:
\begin{equation}
c_i+(y_i-x_i) + b_i = s_i',
\end{equation}
\begin{equation}
b_i^2 = x_i^2 + \left(\frac{L}{3}\right)^2.
\end{equation}

\item {\emph{Step 4:}} Let $\alpha_S = \{a_i\}_{i=1}^{2n}$ where $a_i = [x_i, y_i]$ if $i$ even, and $a_i = [-y_i, -x_i]$ if $i$ odd (Figure \ref{fig:fig16}).

\begin{figure}[ht]
   \centering
\includegraphics[width=1\textwidth]{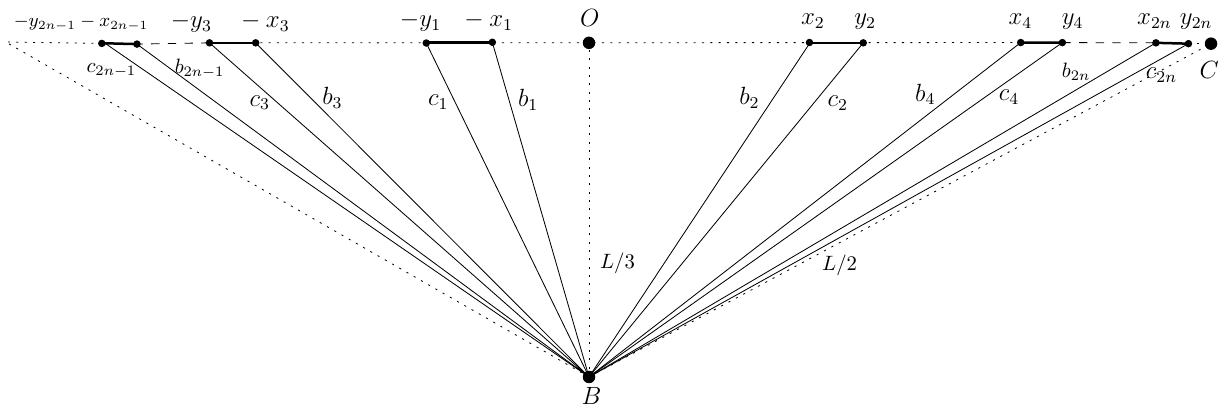}
    \caption{Diagram for Step 4. Final construction.}
    \label{fig:fig16}
\end{figure}
\end{itemize} 

It is easy to see that the construction of $\alpha_S$ can be done in polynomial time. This result follows from the following remark and lemma. 
\begin{Rem}
 Notice that each $c_i$ can be obtained in constant time. Moreover, as in each step we compute $y_i$ by making a tour of length $L$ by equation (2), and the point $C$ is at distance $L/2$ from $B$ then all $y_i < C$ as $C$ is the (infinite) limit of such process. 
 \end{Rem}
 Next lemma proves the case $c_0$, the others are left to the reader. 
\begin{lemma}
    $c_0$ and $y_0$ can be obtained in constant time
\end{lemma}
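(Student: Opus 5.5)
Since the two quantities are tied by the geometry of the construction, $c_0^2 = y_0^2 + (L/3)^2$ (the same relation as equation (3), with $B$ at height $L/3$ above the line in the sense used for all the $c_i$), I would reduce the task to fixing a single explicit value and checking that it satisfies condition (1). The plan is to exhibit $c_0$ as a closed-form expression in $L$, $n$ and $\epsilon$, obtained by a bounded number of arithmetic operations and one square root, and then set $y_0=\sqrt{c_0^2-(L/3)^2}$. Both are then constant-time computations in the real-RAM model used throughout the paper, and the remaining $c_i$ follow from the recurrence (2)--(3), which the authors leave to the reader.

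The key step is to make the three lower bounds in (1) explicit. The first, $\frac{L}{2}-\frac{\epsilon L}{4n}$, and the third, $\frac{\sqrt5 L}{6}$, are constants. Since $\sqrt5/6\approx 0.3727<1/2$, the third is dominated by the first whenever $\epsilon/(4n)$ is small. The second bound, $L-y_0-\frac L3=\frac{2L}{3}-y_0$, depends on $y_0$ itself, so I would rewrite it through $y_0=\sqrt{c_0^2-L^2/9}$. The function $c_0+\sqrt{c_0^2-L^2/9}$ is increasing in $c_0$, so the condition $c_0>\frac{2L}{3}-y_0$ holds for all $c_0$ above the unique threshold solving $c_0+\sqrt{c_0^2-L^2/9}=\frac{2L}{3}$. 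Squaring $\sqrt{c_0^2-L^2/9}=\frac{2L}{3}-c_0$ gives $-\frac{L^2}{9}=\frac{4L^2}{9}-\frac{4L}{3}c_0$, so the threshold is $c_0^*=\frac{5L}{12}$. This is again below $L/2$.

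I would therefore take any explicit value strictly between $\frac{L}{2}-\frac{\epsilon L}{4n}$ and $\frac{L}{2}$, for instance $c_0=\frac{L}{2}-\frac{\epsilon L}{8n}$. Keeping $c_0<L/2$ matters because of the preceding remark: it keeps $y_0$ to the left of $C$, which has $d(C,B)=L/2$, and it keeps the subsequent tours of length $L$ in (2) well defined. I then verify (1) term by term. The first inequality is immediate. The second holds because $c_0>\frac{L}{2}-\frac{L}{24}>\frac{5L}{12}$, using $\epsilon<1/3$ and $n\ge1$. The third holds because $\frac{L}{2}-\frac{L}{24}=\frac{11L}{24}>\frac{\sqrt5L}{6}$.

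The main obstacle is the implicit, self-referential second bound, which involves $y_0$; the monotonicity argument handles it. A smaller point is the notational tension between the distance $L/2$ from $B$ to the line and the $L/3$ appearing in (3). I would state explicitly which convention the computation of $y_0$ uses, and note that either choice only changes the constants and not the constant-time conclusion.
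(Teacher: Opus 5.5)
Your proof is correct and is essentially the paper's argument. The paper likewise turns each lower bound in (1) into a threshold point. For the self-referential bound it uses the length-$L$ tour $B\to O\to y_0'\to B$, which gives exactly your threshold $c_0'=5L/12$, $y_0'=L/4$. It then takes a point strictly beyond the largest threshold but left of $C$ (the midpoint with $C$). You make this explicit by showing that the first bound always dominates (using $\epsilon<1/3$ and $n\ge 1$) and by writing $c_0$ in closed form.

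One correction to your closing aside: the height $L/3$ is forced, not a choice of convention. With height $L/2$ the point $C$ would collapse to $O$ and every $c_i$ would be at least $L/2$. Your requirement $c_0<L/2$ would then be impossible, and equation (2) would have no nondegenerate solutions. Only the $L/3$ reading is consistent with (3) and with $d(O,C)=\sqrt5L/6$.
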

\begin{proof}
    First determine $y_0'$ and $c_0'$ by making just one tour of length $L$, i.e. $c_0' + L/3+y_0' = L$ . Given that $C$ is the right point of the line at  distance of $L/2$ from $B$, the tour can not reach $C$.    
    Compute now in constant time $y_0''$ as the point corresponding to $c_0'' = \frac{\sqrt{5}L}{6}$ and $y_0'''$ the point corresponding to $c_0''' = \frac{L}{2}- \frac{\epsilon L}{4n}$. It is also clear that both points, $y_0''$ and $y_0'''$ can not reach $C$. Let $y_0^{4}$ be the maximum of $y_0'$, $y_0''$ and $y_0'''$, define, for example, $y_0 = \frac{y_0^{4}+ C}{2}$. By construction $y_0$ can be obtained in constant time and satisfies the properties of Step $1$ 
\end{proof}
Without loss of generality, we will consider a tour to the right of O is traversed in a clockwise direction, that is, the tour begins at the point of the segment closest to $O$ (beginning point) and ends at the point of the segment furthest to $O$ (ending point).  
\begin{lemma}
All the equations of Step 3 can be solved in linear time.     
\end{lemma}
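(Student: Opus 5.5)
Each of the $2n$ systems in Step 3 involves only the single unknown $x_i$, so the plan is to solve each one in closed form with $O(1)$ arithmetic operations and a square root, which gives linear total time. After Steps 1 and 2, the quantities $y_i$, $c_i$ and $s_i'$ are known, so for each $i$ the system
\begin{equation*}
c_i + (y_i - x_i) + b_i = s_i', \qquad b_i^2 = x_i^2 + \left(\tfrac{L}{3}\right)^2
\end{equation*}
has only the unknowns $x_i$ and $b_i$. Set $D_i = s_i' - c_i - y_i$. The first equation then reads $b_i - x_i = D_i$. Since $b_i > x_i$, we need $D_i > 0$. Substituting $b_i = x_i + D_i$ into the second equation, the quadratic terms cancel:
\begin{equation*}
2D_i x_i + D_i^2 = \left(\tfrac{L}{3}\right)^2, \qquad\text{so}\qquad x_i = \frac{(L/3)^2 - D_i^2}{2D_i}, \quad b_i = x_i + D_i .
\end{equation*}
Each $x_i$ therefore costs a constant number of operations. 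Summing over $i=1,\dots,2n$, Step 3 runs in $O(n)$ time, given the $O(n)$ values from Steps 1 and 2, each obtained in constant time by the preceding remark and lemma.

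The main obstacle is checking that this closed-form solution is geometrically admissible. We need $D_i>0$ and $0<x_i<y_i$, and in fact $y_{i-1}<x_i$, so that the ordering $|x_1|<|y_1|<\dots$ holds. The condition $x_i<y_i$ is equivalent to $s_i' > 2c_i$, because $x\mapsto (x-d(x,B)) $ is increasing, so $s_i'-c_i-y_i$ must lie strictly between $-x$-type bounds $c_i - y_i$ and the value at $y_{i-1}$. I would argue it as follows. The map $x\mapsto b(x)-x$ is strictly decreasing on $x\ge 0$, from $L/3$ down to $0$. Hence a unique solution $x_i\in(y_{i-1},y_i)$ exists iff
\begin{equation*}
c_i - y_i < D_i < c_{i-1} - y_{i-1},
\end{equation*}
which is equivalent to $2c_i < s_i' < c_i + y_i - y_{i-1} + c_{i-1} = L$, using equation (2).

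The upper bound follows because $s_i' = Ks_i + 2c_{2n} \le KR + 2c_{2n} = L$. Equality occurs only at $s_{2n}=R$, and in that case $x_{2n}=y_{2n-1}$ is a degenerate but acceptable endpoint; alternatively one perturbs slightly. The lower bound follows because $c_i$ increases with $i$, as the $y_i$ increase toward $C$. Hence $2c_i\le 2c_{2n} < s_i'$, since $Ks_i>0$; here $K>0$ holds because $c_{2n}<L/2$, which follows from $y_{2n}<C$. Together these confirm that the explicit formula gives a valid right-hand point for each $i$, which proves the lemma.
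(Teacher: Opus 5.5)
Your proof is correct. Its admissibility part is the paper's argument: the tour length $c_i+(y_i-x)+\sqrt{x^2+(L/3)^2}$ is strictly decreasing in $x$, equals $L$ at $x=y_{i-1}$ by equation (2) and $2c_i$ at $x=y_i$, and $s_i'\in(2c_{2n},L]\subseteq(2c_i,L]$, so $x_i\in[y_{i-1},y_i]$. What you add is the explicit formula $x_i=\bigl((L/3)^2-D_i^2\bigr)/(2D_i)$, which actually justifies ``constant time per equation'' (the paper's proof only shows existence), together with justifications the paper omits ($c_i$ increasing, $K>0$) and an explicit note of the boundary case $s_i'=L$, $x_i=y_{i-1}$.
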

\begin{proof}
First notice that, since $s_i \in (0, R]$  for $i=1,\cdots,2n$, $s_i' \in (2c_{2n}, L]$. For the segment $[y_{i-1}, y_{i}]$ the tour that begins in $y_{i-1}$ and ends in $y_{i}$ has length $L$ (this follows from the second equality of Step 1). The (length)  function $L_i(x)$ that associates to $x \in [y_{i-1}, y_{i}]$ the corresponding length $L_i(x)$ of the tour that begins at $x$  and ends at $y_{i}$ is a continuous decreasing function whose range varies from $2c_i$ to $L$ when $x \in [y_{i-1}, y_i]$. This follows from the fact that the function $L_i(x)$ is defined by: 
\begin{equation}
\notag
L_i(x) = \sqrt{x^2+ \left( \frac{L}{3}\right)^2}+y_i-x+c_i,
\end{equation}
and its derivative is always negative.
Therefore, since each $c_i < c_{2n}$, for each number $s_i'$ between $(2c_{2n}, L]$ we can find  a  $x_i \in [y_{i-1}, y_{i}] $ such that the corresponding tour will have length $L_i(x_i) = s_i'$.
\end{proof}
We now highlight two useful properties of our construction: 
\begin{lemma} \label{lemma: Condition of cardinality}
    Condition of step 1 applied to $c_{2n}$ implies:
    \begin{equation}
        2n(L-2c_{2n}) < \epsilon L 
    \end{equation}
\end{lemma}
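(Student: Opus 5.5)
We need $2n(L-2c_{2n})<\epsilon L$, i.e. $c_{2n} > \frac{L}{2}-\frac{\epsilon L}{4n}$. The first condition of Step~1 gives exactly this bound for $c_0$, namely $c_0 > \frac{L}{2}-\frac{\epsilon L}{4n}$. So the plan is to show that the sequence $c_i$ is increasing in $i$; then $c_{2n} > c_0$ and the claimed inequality follows by rearranging: multiplying $c_{2n}>\frac{L}{2}-\frac{\epsilon L}{4n}$ by $4n$ gives $4nc_{2n} > 2nL-\epsilon L$, which is $2n(L-2c_{2n})<\epsilon L$.

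To prove monotonicity, I will use the relation $c_i^2 = y_i^2 + (L/3)^2$ from Step~1. This holds for $i\ge 1$ by the third equation, and it also holds for $i=0$, since $y_0$ is the right-hand point at distance $c_0$ from $B$ with the same vertical offset. Hence $c_i$ is a strictly increasing function of $y_i>0$, and it suffices to show $y_{i+1}>y_i$. The second equation of Step~1 gives
\[
y_{i+1}-y_i = L - c_i - c_{i+1}.
\]
By the preceding Remark every $y_i$ lies strictly left of $C$, and $C$ is at distance $L/2$ from $B$. Since distance to $B$ increases with the abscissa on the right half-line, this yields $c_i<L/2$ for all $i$. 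Therefore $c_i+c_{i+1}<L$, so $y_{i+1}>y_i$ and consequently $c_{i+1}>c_i$. Induction then gives $c_{2n}>c_0$.

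The main obstacle is justifying $c_i<L/2$ rigorously rather than appealing to the Remark's informal limit argument. I would do this by induction. Assume $c_i<L/2$ and suppose, for contradiction, that $c_{i+1}\ge L/2$. Then $y_{i+1}$ lies at or beyond $C$, and the defining equation would force $y_{i+1}-y_i = L-c_i-c_{i+1} \le L/2 - c_i$. However, $y_{i+1}-y_i\ge y_C - y_i$, and the triangle inequality gives $y_C-y_i > d(B,C)-c_i = L/2-c_i$, with strict inequality because $B,y_i,C$ are not collinear. These two bounds contradict each other.

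We also need the base case $c_0<L/2$, which holds by the choice of $y_0<C$ in the lemma constructing $c_0$. Note that existence of $y_{i+1}$ is not needed here; only the bound on any solution is.
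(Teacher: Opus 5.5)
Your proof is correct and takes the same route as the paper, whose entire argument is the chain $c_{2n} > c_0 > \frac{L}{2}-\frac{\epsilon L}{4n}$ followed by rearrangement. The only difference is that you actually prove $c_{2n} > c_0$: you show $c_i < L/2$ by induction using the strict triangle inequality at $C$, which gives $y_{i+1}-y_i = L-c_i-c_{i+1} > 0$, whereas the paper takes this monotonicity for granted from its informal Remark that all $y_i$ lie left of $C$.
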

\begin{proof}
    It can be easily derived since $c_{2n}> c_0 > \frac{L}{2}- \frac{\epsilon L}{4n}$
\end{proof}
\begin{lemma}\label{lemma: c_k bigger x}
    For every $x \in [x_i, y_i]$ (resp $x \in [-y_i, -x_i]$), every $c_k$ from $k = 0,\cdots, 2n$ satisfies $c_k > x$.
\end{lemma}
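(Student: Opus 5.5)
The plan is to bound $|x|$ by a single uniform quantity and then compare that quantity with the smallest of the $c_k$. For any point $x$ of a segment of $\alpha_S$, we have $|x|\le y_i\le y_{2n}$. By the Remark following Step 4, every $y_i$ lies strictly to the left of the point $C$, and $d(O,C)=\sqrt{(L/2)^2-(L/2)^2}$ in the plane geometry of the construction. Here we must be careful: $B$ sits at height $L/2$ in the description, but the distances $c_i$ are computed with height $L/3$. So the proof should work directly with the formulas $c_k^2=y_k^2+(L/3)^2$.

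First I would show that the sequence $c_k$ is increasing in $k$, so that $\min_k c_k=c_0$. Equation (2) gives $y_{k+1}-y_k=L-c_k-c_{k+1}$. Since every tour has length at most $L$ and the points are constructed successively to the right, $y_{k+1}>y_k$. Hence $c_{k+1}>c_k$, because $c$ is an increasing function of $y>0$. It therefore suffices to prove $c_0>y_{2n}$.

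Next I would bound $y_{2n}$ from above. From $c_k>c_0>L/2-\epsilon L/(4n)$ and (2), each increment satisfies $y_{k+1}-y_k=L-c_k-c_{k+1}<\epsilon L/(2n)$. Summing $2n$ increments gives $y_{2n}<y_0+\epsilon L$. For $y_0$, I would use the Step 1 condition $c_0>L-y_0-L/3$, i.e.\ $y_0+c_0>2L/3$, together with $c_0^2=y_0^2+L^2/9$, to pin down $y_0$. More directly, the Remark gives $c_k<L/2$ for all $k$, so
\[
y_k=\sqrt{c_k^2-L^2/9}<\sqrt{L^2/4-L^2/9}=\frac{\sqrt5 L}{6}.
\]
Thus every $|x|\le y_{2n}<\sqrt5L/6$. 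Combined with the Step 1 hypothesis $c_0>\sqrt5L/6$ and the monotonicity of the $c_k$, this yields $c_k\ge c_0>\sqrt5 L/6>|x|$ for all $k$. For a point $x\in[-y_i,-x_i]$ the claim is immediate, since $x<0<c_k$; if it is meant in absolute value, the same bound applies by symmetry.

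The main obstacle is justifying $c_k<L/2$, i.e.\ $y_k<C$. The Remark asserts it, but I would verify it as follows. If $c_k<L/2$, then $y_{k+1}-y_k=L-c_k-c_{k+1}$ together with $c_{k+1}\le c_k+(y_{k+1}-y_k)$ (triangle inequality) gives $c_{k+1}<L/2$ whenever the recurrence has a solution. The base case $c_0<L/2$ holds by the choice $y_0<C$ in the lemma constructing $y_0$. I would also note the consistency caveat that $\sqrt5L/6<L/2$ fails numerically ($\sqrt5/6\approx0.373<0.5$ is fine), so the interval $(\sqrt5L/6,\,L/2)$ for $c_0$ is nonempty and the argument is sound.
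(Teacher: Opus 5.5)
Your proof is correct and is essentially the paper's argument: bound every point of a segment by $d(O,C)=\sqrt5L/6$ (which, as you rightly observe, requires $B$ to sit at height $L/3$, not $L/2$) and compare with $c_k\ge c_0>\sqrt5L/6$ from Step 1. The only differences are that you justify two facts the paper takes for granted, namely $y_k<C$ (via your inductive triangle-inequality argument) and the monotonicity of the $c_k$, and that your initial $\epsilon$-increment bound on $y_{2n}$ is not needed for the conclusion.
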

\begin{proof}
     A point $x$ within the interval $[x_i, y_i]$ is bounded by the distance from $O$ to the point $C$, i.e. the endpoint of the segment. Such distance is $\frac{\sqrt{5}L}{6}$ by the Pythagorean theorem and $c_k> c_0 > \frac{\sqrt{5}L}{6}$ by step $1$. 
     
\end{proof}

\subsection{Main Theorem}\label{subsection: Main Theorem}
The concept behind our construction is to ensure that every solution to the Minsum problem is built by tours such that each tour covers exactly one segment. Moreover, the tours of a solution of the $2$-Min-Makespan problem are exactly the tours of the solution of the Minsum problem distributed optimally. Such optimal distribution is connected with the Minmax Partition Problem as the length of each tour that covers exactly one segment is  $s_i' = Ks_i +C$ by Step 3. The main theorem of this section is derived from these properties, which are expressed here as two statements. For the sake of clarity, we will provide proofs of these statements in the next section.

\par 
\begin{Statement 1}\label{Statement 1}
    Let $\alpha_S$ be the segments in our construction associated with a set of positive rationals $S = \{s_i\}_{i=1}^{2n}$, then the solution of the Minsum problem for one drone is determined by a set of tours $T = \{t_i\}_{i= 1}^{2n}$ with each tour $t_i$ covering exactly the $i$-th segment.
\end{Statement 1}  

\begin{Rem}
Each of the tours $t_i$ of $T$ satisfies $\ell(t_i) = s_i' = Ks_i +C$ by Step 3.
\end{Rem}
\begin{Statement 2}\label{lemma: Statement 2}  The tours of any solution $\{T_1, T_2\}$ of the $2$-Min-Makespan problem for $\alpha_S$ correspond to tours $t_i$ derived from a solution to the Minsum problem.  
\end{Statement 2}
With these two statements, we can now prove:
\begin{theorem}\label{Theorem: NP-Hard}
    The $2$-Min-Makespan problem is Strongly NP-hard.
\end{theorem}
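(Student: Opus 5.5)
We will prove the theorem by a polynomial reduction from the Minmax partition problem (Problem \ref{Problem: balanced min max finite partition problem}), which is strongly NP-hard by Lemma \ref{lemma: NP-hardpartition for minmax}, taking Statements 1 and 2 as given. Let $S=\{s_i\}_{i=1}^{2n}$ be a set of positive rationals. We build $\alpha_S$ as in Section \ref{section: Construction}; by the remarks and lemmas there, every $y_i$, $c_i$, $x_i$, $b_i$ is obtained in constant time, so the whole construction is polynomial. To keep the reduction \emph{strong}, we will check that the numbers produced have encoding length polynomial in the encoding of $S$, or that they can be rounded to rationals of polynomial bit length. This should work because the constraints of Step 1 are strict inequalities with slack, and Step 3 only solves a monotone equation.

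Next we relate solutions. By Statement 2, any optimal pair $\{T_1,T_2\}$ for the $2$-Min-Makespan problem on $\alpha_S$ consists of the tours $t_1,\dots,t_{2n}$ of the Minsum solution from Statement 1, split between the two drones. By the Remark after Statement 1, $\ell(t_i)=s_i'=Ks_i+C$ with $C=2c_{2n}$. So the makespan is $\max(\sum_{i\in I_1}s_i',\sum_{i\in I_2}s_i')$ for a partition $I_1\cup I_2=\{1,\dots,2n\}$.

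The partition must also be balanced in cardinality, with $|I_1|=|I_2|=n$, to match Problem \ref{Problem: balanced min max finite partition problem}. We will use Lemma \ref{lemma: Condition of cardinality}: each $s_i'\in(2c_{2n},L]$, and all $s_i'$ differ from $2c_{2n}$ by less than $L-2c_{2n}$. Suppose one drone takes at least $n+1$ tours. Its load is at least $(n+1)\cdot 2c_{2n}$. A balanced split has load at most $n L< n\cdot 2c_{2n}+\epsilon L/2$. Since $2c_{2n}>L-\epsilon L/(2n)$ with $\epsilon<1/3$, we get $(n+1)2c_{2n}>n\cdot 2c_{2n}+\epsilon L/2$. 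Hence every optimum is cardinality-balanced.

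Among balanced splits, minimizing the larger sum $M_2'$ is exactly Problem \ref{Problem: balanced min max finite partition problem} for $S'=\{Ks_i+C\}$. By Proposition \ref{Proposition: KC-proposition}, this is equivalent to solving it for $S$. An optimal solution of the $2$-Min-Makespan instance therefore yields, in polynomial time, an optimal $A\subset S$ and hence decides the two-way balanced partition problem. The main obstacle is the strong-hardness bookkeeping: the geometric quantities $c_i,y_i,x_i$ come from square-root equations and need not be rational. We would handle this by perturbing to nearby rationals with polynomially many bits. The margins above are strict, and the gap between distinct partition values scales with $K$, so a perturbation smaller than $K/(4n)$ times the minimal nonzero difference of subset sums preserves the equivalence.
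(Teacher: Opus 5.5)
Your reduction follows the paper's. You build $\alpha_S$, use Statements 1 and 2 to see that every optimal pair $\{T_1,T_2\}$ is a split of the single-segment tours $t_i$ with $\ell(t_i)=Ks_i+2c_{2n}$, and transfer optimality through Proposition~\ref{Proposition: KC-proposition}. Your cardinality-balance step is correct and improves on the paper's write-up. The paper reads off $A'$ from $T_2$ and calls it a solution of Problem~\ref{Problem: balanced min max finite partition problem} without checking $|A'|=n$. Your comparison of $(n+1)\,2c_{2n}$ with $nL$, via Lemma~\ref{lemma: Condition of cardinality}, shows that every optimum is balanced.

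The gap is in the ``strongly'' part, and your proposed fix does not close it. Polynomial encoding length only gives a polynomial-time reduction, which transfers ordinary NP-hardness. To transfer \emph{strong} NP-hardness you need a pseudo-polynomial transformation: the numbers of $\alpha_S$ (numerators and denominators, in unary) must be polynomially bounded in magnitude in the unary size of $S$. Rounding cannot achieve this, for three reasons.

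First, independently of $S$, Step 1 forces exponentially fine coordinates. Write $y=\frac{L}{3}\cdot\frac{1-u^2}{2u}$, so that $c=\frac{L}{3}\cdot\frac{1+u^2}{2u}$. The recurrence $y_{i+1}-y_i+c_i+c_{i+1}=L$ then becomes $u_{i+1}=1/(3-u_i)$. This converges to $u^*=(3-\sqrt{5})/2$ (i.e.\ $y_i\to C$) with contraction ratio $(u^*)^2\approx 0.146$. Hence $y_{2n}-y_{2n-1}$ is exponentially small in $n$, and no representation of these coordinates has polynomially bounded denominators.

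Second, your threshold, $\frac{K}{4n}$ times the minimal nonzero gap, is not inverse-polynomial for rational inputs $s_i=p_i/q_i$. The gap is only guaranteed to be at least $1/\operatorname{lcm}(q_1,\dots,q_{2n})$. If it were always inverse-polynomial, scaling and rounding would give a pseudo-polynomial algorithm for rational partition, contradicting its strong NP-completeness unless $\mathrm{P}=\mathrm{NP}$.

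Third, Step 1 has no slack to absorb rounding, contrary to your remark. The recurrence is an equality, and Lemma~\ref{lemma: no tour that intersects different segments} is tight: the tour across a gap has length exactly $L$.

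Conversely, rounding is unnecessary for ordinary NP-hardness. With $u_0$ rational, the same parametrization makes every $y_i,c_i$ rational. Then $b_i-x_i=s_i'-c_i-y_i=:d_i>0$ gives $x_i=\frac{1}{2}\bigl(\frac{L^2}{9d_i}-d_i\bigr)$, which is rational with polynomial bit length.

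So your argument, like the paper's (which only notes that the construction is polynomial-time and never discusses encoding), establishes NP-hardness. The strong version would need a different construction whose coordinates have polynomially bounded magnitude.
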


\begin{proof}
   Given a set of positive rationals $S = \{s_i\}_{i=1}^{2n}$ (an instance of Minmax Partition Problem, i.e. Problem \ref{Problem: balanced min max finite partition problem}), obtain $\alpha_S$ in polynomial time using previous construction. Set $K= \frac{L-2c_{2n}}{R}$ and $C = 2c_{2n}$. For $S$, $K$ and $C$, define the set $S'$ as in Proposition \ref{Proposition: KC-proposition}. \par
    Now, given a solution $A$ of Problem \ref{Problem: balanced min max finite partition problem} for $S$, use Proposition \ref{Proposition: KC-proposition} to get a solution $A'$ for $S'$. By Statement $1$, we get a solution of the Minsum problem for $\alpha_S$ with one drone is $T = \{t_i\}_{i= 1}^{2n}$ with each tour $t_i$ covering exactly the $i$-th segment. Thus, assigning each tour $t_i$ to $T_2$ if and only if $s_i' \in A'$, we will have a solution of the $2$-Min-Makespan problem by Statement $2$, the minimality of $A'$ and the fact that $\ell(t_i) = s_i'$.\par
    In the other direction, given a solution $\{T_1, T_2\}$, for the $2$-Min-Makespan problem in $\alpha_S$,  assign $s_i'\in A'$ if and only if $t_i \in T_2$ (assume w.o.l.g that $\ell(T_2)\geq \ell(T_1)$). By Statement $2$, the length of each $t_i$ is $s_i'$; therefore we obtain a solution $A'$ of Problem \ref{Problem: balanced min max finite partition problem} for $S'$ and, by Proposition \ref{Proposition: KC-proposition}, we can consequently derive a solution for $S$.
\end{proof}

\subsection{Proofs of Statement 1 and Statement 2} \label{subsection: Proof of the statements}

From Step 1 we get:

\begin{lemma}\label{lemma: no tour that intersects different segments}
    There is no tour of length at most $L$ that intersects two different segments in  $\alpha_S$. 
\end{lemma}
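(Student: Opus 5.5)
We argue by contradiction: suppose a tour $t$ with $\ell(t)\le L$ meets two distinct segments $a_i$ and $a_j$ of $\alpha_S$ with $i<j$. We split into two cases according to whether the two segments lie on the same side of $O$ or on opposite sides. In both cases we lower-bound $\ell(t)$ by the shortest closed path from $B$ through one point of $a_i$ and one point of $a_j$. Since the line lies at distance $L/2$ from $B$, such a closed path is $B\to p\to q\to B$ with $p\in a_i$ and $q\in a_j$. The argument only needs the geometry of Step 1 (the points $y_i$ and the distances $c_i$); the positions of the $x_i$ matter only through $x_i\ge y_{i-1}$, which holds because $x_i\in[y_{i-1},y_i]$ by the proof of the Step 3 lemma.

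\emph{Opposite sides.} Say $p=-u$ with $u>0$ and $q=v$ with $v>0$ (or the mirror image). The path length is then $d(B,-u)+u+v+d(B,v)$. Each distance from $B$ to a point of the line is at least $L/2$, so this length is already at least $L+u+v>L$. We will double-check this against the geometry of the construction, since Step 1 uses the height $L/3$ in the definition of $c_i$ while $B$ was placed at height $L/2$. If the intended height is $L/3$, the same argument instead uses $c_0>\sqrt5L/6$ and $x_i\ge y_0$: the tour contains $B\to p\to O\to q\to B$ in length, and $d(B,p),d(B,q)\ge c_0$ gives $\ell\ge 2c_0+|p|+|q|>2c_0+2y_0$. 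We then check $2c_0+2y_0>L$ using $c_0>L-y_0-L/3$, i.e.\ $c_0+y_0>2L/3$, which gives $2c_0+2y_0>4L/3>L$.

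\emph{Same side.} Say both segments lie to the right of $O$, so $i<j$ and both have the same parity. Then $j\ge i+2$, so the gap between the two segments contains the whole interval $[y_i,y_{i+1}]$. The shortest closed path from $B$ meeting both segments goes out to the nearest point $y_i$ of $a_i$, runs along the line to $x_j\ge y_{j-1}\ge y_{i+1}$, and returns to $B$. Its length is at least $c_i+(x_j-y_i)+b_j$. The map $x\mapsto x+d(B,x)$ is increasing, so $x_j\ge y_{i+1}$ gives $x_j-y_i+b_j\ge y_{i+1}-y_i+c_{i+1}$. Hence the path has length at least $c_i+y_{i+1}-y_i+c_{i+1}=L$ by equation (2), with strict inequality because $x_j\ge y_{j-1}\ge y_{i+1}$ and $j-1\ge i+1$ leaves room (in fact $x_j>y_{j-1}$ when $s_j'<L$, and otherwise the segment endpoints are still distinct). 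Here equation (2) is the telescoping key: consecutive $y$'s are exactly one $L$-tour apart. A tour touching two segments therefore needs length strictly greater than $L$; the boundary case of equality has to be excluded carefully using the strict ordering $|y_i|<|x_{i+1}|$.

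\textbf{Main obstacle.} The delicate step is making the same-side bound strict and justifying that the optimal path between two intervals visits the nearest endpoints. For the latter we will use the triangle inequality: any path from $B$ that touches points $p<q$ on the line has length at least $d(B,p)+(q-p)+d(B,q)$, and this quantity is monotone in $p$ and $q$ in the right directions, since $d(B,x)-x$ is decreasing and $d(B,x)+x$ is increasing.
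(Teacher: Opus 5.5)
Your two-case argument is the paper's own. Opposite-side pairs are excluded by $c_0+y_0>2L/3$ (from $c_0>L-y_0-L/3$), and your bound $\ell\ge 2c_0+2y_0>4L/3$ is a clean version of the paper's ``gap'' sentence. Same-side pairs are excluded by equation (2), which says the hole $[y_i,y_{i+1}]$ alone costs a full tour of length $L$. Your reduction to $d(B,p)+(q-p)+d(B,q)$ via the triangle inequality, together with the monotonicity of $x\mapsto d(B,x)\pm x$, is more careful than the paper's sketch. You are also right that $B$ must sit at height $L/3$, not $L/2$ (otherwise $C$ would coincide with $O$), so only your second opposite-side computation is relevant.

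The step that fails is the strict inequality in the same-side case. Suppose $j=i+2$ and $s_j'=L$. Then $x_j=y_{j-1}=y_{i+1}$, because $L_j(y_{j-1})=c_{j-1}+y_j-y_{j-1}+c_j=L$ and $L_j$ is strictly decreasing. Your lower bound is then exactly $c_i+(y_{i+1}-y_i)+c_{i+1}=L$. It is attained by the tour $B\,y_i\,y_{i+1}\,B$, which meets $a_i$ at $y_i$ and $a_j$ at $x_j$.

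This case always occurs: $s_{2n}'=KR+2c_{2n}=L$, so for $n\ge 2$ the pair $a_{2n-2},a_{2n}$ behaves exactly this way. Distinctness of the endpoints does not give strictness. The ordering $|y_{j-1}|<|x_j|$ you plan to invoke is only asserted in the paper, and Step 3 contradicts it in precisely this case. So the lemma, read literally, is false in this degenerate situation, and the paper's proof passes over it silently.

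You can repair your proof in either of two ways.
\begin{itemize}
\item[(i)] Prove what the proof of Statement 1 actually needs. Since $d(B,x)-x$ is strictly decreasing and $d(B,x)+x$ strictly increasing, your chain of inequalities shows that a tour of length at most $L$ meeting two same-side segments $a_i,a_j$ must touch them only at $y_i$ and $x_j$, with $x_j=y_{i+1}$. Such a tour covers no positive-length piece of either segment, and this is all that argument uses.
\item[(ii)] Change $K$ so that $\max_i s_i'<L$, e.g.\ $K=(L-2c_{2n})/(2R)$; the later arguments only use $s_i'\in(2c_{2n},L]$. This gives $x_j>y_{j-1}$, and your strict monotonicity then makes the bound strict.
\end{itemize}
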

\begin{proof}
    From the fact that $c_0 >  L-y_0- \frac{L}{3}$, there is a gap between right-hand segments and left-hand segments that can not be covered by a tour of length at most $L$. Therefore there is no tour that intersects a right-hand segment and a left-hand segment. Moreover, as we have taken segments alternatively in each side of $O$ leaving  a hole of a tour of length $L$ between two consecutive ones (by the equality of step $1$: $y_{i+1} -y_{i} +c_{i} +c_{i+1} = L$), there is no tour that connects two consecutive segments. 
\end{proof}

A straightforward consequence of the previous Lemma is {\bf Statement 1}: 

\begin{proof}[{\bf Proof of Statement 1. }]
By Lemma \ref{lemma: no tour that intersects different segments}, if a tour begins in a segment, it must end in the same segment. Furthermore, by triangle inequality, it would take more length to make more than one tour to cover just one segment. So the set of tours $T = \{t_i\}_{i= 1}^{2n}$ with each tour $t_i$ covering exactly the $i$-th segment is optimal for the Minsum problem. 
\end{proof}
\begin{Rem}\label{Remmark: minimum length tour}
    Notice that by Step $1$, every tour $t \in T$ has a length greater than two times $c_0$, so its length is at least $\frac{\sqrt{5}L}{3}$, in particular, its length \textcolor{blue}{is} at least $\frac{2L}{3}$.
    \end{Rem}
From now on, let $T = \{t_i\}_{i= 1}^{2n}$ be the solution of the Minsum problem. To prove {\bf Statement 2}, we need some technical lemmas. We begin with a lemma that shows that if a segment $[x_i, y_i]$ (or $[-y_i, -x_i]$) of our construction is covered by two different drones (\emph{broken segment}), then the slowest drone (the one that makes more length) covers its part only with one tour. 
\begin{lemma}\label{lemma: Only two tours disjoint per segment}
    Let $\{T_1, T_2\}$ be two sets of tours that solve the $2$-Min-Makespan problem for $\alpha_S$ with $\ell(T_2) \ge \ell(T_1)$. For every segment $[x_i, y_i]$ (resp. $[-y_i, -x_i]$ for the left-hand side segments) there is at most a point $r\in [x_i, y_i]$ (resp. $r\in [-y_i, -x_i]$) and at most a tour $t_2$ of $T_2$ such that $t_2$ completely covers the subsegment $[x_i, r]$ (resp.$[r, -x_i]$) or the subsegment $[r, y_i]$ (resp. $[-y_i, r]$).    
\end{lemma}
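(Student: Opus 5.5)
The plan is an exchange argument on a single segment, say a right-hand one $[x_i,y_i]$; the left-hand case is symmetric. By Lemma \ref{lemma: no tour that intersects different segments}, every tour of $T_1\cup T_2$ meets at most one segment of $\alpha_S$. So we can look at the tours touching $[x_i,y_i]$ in isolation and modify only those. Two facts drive every exchange.

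\emph{Fact (a).} Every tour whose trajectory touches the segment has length at least $2c_0>\frac{\sqrt5 L}{3}>\frac{2L}{3}$, by Remark \ref{Remmark: minimum length tour} and Lemma \ref{lemma: c_k bigger x}.

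\emph{Fact (b).} The single tour covering the whole segment, $t_i$, has length $s_i'\le L$ (Step 3). Any tour covering a subsegment $[p,q]\subseteq[x_i,y_i]$ has length at most $\ell(t_i)$, since $d(x,B)+(q-p)+d(q,B)$ is monotone in the endpoints, as in the proof of the lemma about Step 3.

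Hence \emph{replacing any two or more tours on the segment by one tour covering their union is feasible and strictly shortens the total by at least $2c_0>2L/3$}. The argument goes by contradiction, in three steps.

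\emph{Step 1.} Suppose $T_2$ contains two or more tours touching $[x_i,y_i]$. Replace them by one tour covering the convex hull of their pieces. By Fact (b) this tour has length at most $L$, and by Fact (a) $\ell(T_2)$ strictly decreases. Coverage is preserved, because the hull contains all the old pieces.

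\emph{Step 2.} So $T_2$ has at most one tour $t_2$ on the segment, covering some $[p,q]$. If $x_i<p$ and $q<y_i$, the two leftover pieces $[x_i,p]$ and $[q,y_i]$ must be covered by $T_1$. If $T_1$ covers them with a single tour, that tour covers the whole segment, so we delete $t_2$. Otherwise $T_1$ uses at least two tours on the segment, of total length at least $4c_0>L\ge s_i'$. We replace them all by $t_i$ and delete $t_2$. In both cases $\ell(T_1)$ does not increase and $\ell(T_2)$ strictly decreases.

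\emph{Step 3.} Hence $t_2$ covers a subsegment containing $x_i$ or $y_i$, which is the claimed $[x_i,r]$ or $[r,y_i]$, or it covers the whole segment (the case $r$ equal to the other endpoint).

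The main obstacle is turning ``$\ell(T_2)$ strictly decreases, $\ell(T_1)$ does not increase'' into a contradiction with optimality. This is immediate when $\ell(T_2)>\ell(T_1)$, because the makespan then strictly drops. When $\ell(T_1)=\ell(T_2)$, the makespan need not change. I would first try to handle this tie by also shrinking $T_1$. In Step 2, $T_1$ strictly decreases whenever it used two or more tours. In Step 1, if $T_1$ also touches segment $i$, we can give the whole segment to the merged $T_2$ tour: its length is at most $s_i'\le L$, and dropping $T_1$'s tour there saves at least $2c_0$. That makes both strictly smaller, provided the saving $2c_0$ exceeds any increase, which holds by Fact (a) and Lemma \ref{lemma: condition of cardinality} applied as $s_i'-2c_0<L-2c_{0}\approx\epsilon L/2n$. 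If a residual tie case remains, I would fix the convention that among all optimal solutions we take one minimizing $\ell(T_1)+\ell(T_2)$ (the Minsum tie-break). Every exchange above strictly decreases this sum without increasing the makespan, which gives the contradiction in all cases.
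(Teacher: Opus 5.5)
Your exchange argument is the paper's own: merge the $T_2$-tours on one segment into a single tour via the triangle inequality. You are more careful than the paper in two respects. You also exclude a $T_2$-tour on an interior piece $[p,q]$ with $x_i<p$, $q<y_i$, which the paper never treats. You also observe, correctly, that a strict decrease of $\ell(T_2)$ contradicts optimality only when $\ell(T_2)>\ell(T_1)$; the paper's ``contradict the minimality of $T_2$'' ignores ties.

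\textbf{The gap is how you close the tie.} Choosing an optimal solution that also minimizes $\ell(T_1)+\ell(T_2)$ proves the lemma only for that distinguished solution. The lemma, and Statement 2 built on it, are claimed for \emph{every} optimal $\{T_1,T_2\}$. Two residual cases remain: Step 1 with $T_1$ not touching segment $i$, and Step 2 with $T_1$ covering segment $i$ by one tour. In both, no local exchange lowers the makespan, so a global idea is needed.

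\textbf{A fix using only the paper's lemmas.} By Statement 1, every feasible solution has total length at least $\ell(T)=\sum_i s_i'$. Split the $2n$ tours $t_i$ into two groups of $n$. By Lemma \ref{lemma: subset of the same cardinality less L/3}, this gives a feasible solution of makespan $<\ell(T)/2+\epsilon L/2$. Hence every optimal solution satisfies $\ell(T_1)+\ell(T_2)<\ell(T)+\epsilon L$. Each of your exchanges yields a feasible solution, since all new tours have length at most $s_i'\le L$, whose total is smaller by some $w>\epsilon L$. That total would be below $\ell(T)$, a contradiction, whichever drone is longer. The tie discussion then becomes unnecessary.

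\textbf{The savings $w$ need correcting.} A merge does not save ``at least $2c_0$''. For disjoint pieces $[p_1,q_1]$, $[p_2,q_2]$ it saves $d(B,q_1)+d(B,p_2)-(p_2-q_1)$, which can be smaller. What you can use is $w\ge 4c_0-s_i'\ge 4c_0-L>(\frac{2\sqrt{5}}{3}-1)L>\frac{L}{3}>\epsilon L$. In Step 2 the saving is at least $\ell(t_2)\ge 2c_0$ in both subcases.

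\textbf{Fact (a) needs a different justification.} Argue directly: segment $i$ lies in $[y_{i-1},y_i]$ (resp.\ its mirror image), so each of its points is at distance at least $c_{i-1}\ge c_0$ from $B$. Lemma \ref{lemma: c_k bigger x} compares $c_k$ with coordinates, not distances, so it does not give this.
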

\begin{proof}
The proof follows easily from the triangle inequality. If there were two (or more) tours $t_2, t_2' \in T_2$ covering different parts of the segment $[x_i, y_i]$ with $t_2$ the tour closest to the segment $OB$, such a solution could be improved by removing $t_2$ and $t_2'$ from $T_2$ and adding the tour $t_2^{''}$ that begins at the beginning point of $t_2$ and ends at the end point of $t_2'$, which would contradict the minimality of $T_2$. 
\end{proof}
    From Lemma \ref{lemma: Only two tours disjoint per segment} we could think that for a broken segment the part not covered by the set of tours
    of $T_2$ might be covered by several different tours of $T_1$. Although from our proof of Statement $2$ it will later be deduced that this is not possible, we can also skip this technicality with the following result: 
\begin{lemma}
    From every optimal solution $\{T_1, T_2\}$ of the $2$-Min-Makespan problem for $\alpha_S$ with $\ell(T_2) \ge \ell(T_1)$ we can find in linear time another solution $\{T_1', T_2\}$  such that every segment $[x_i, y_i]$ (resp. $[-y_i, -x_i]$ for the left hand side segments) intersects $T_1'$ in at most one tour. 
\end{lemma}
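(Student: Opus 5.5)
The idea is to modify $T_1$ segment by segment with a local merging operation, leaving $T_2$ untouched, so the makespan $\ell(T_2)$ never changes. Fix an optimal solution $\{T_1,T_2\}$ with $\ell(T_2)\ge \ell(T_1)$, and fix a segment, say a right-hand one $[x_i,y_i]$; the left-hand case is symmetric. By Lemma \ref{lemma: no tour that intersects different segments}, every tour of $T_1$ meeting $[x_i,y_i]$ meets no other segment. Such a tour only needs to cover the part of $[x_i,y_i]$ it intersects, so we may assume it is a triangle $B\to p\to q\to B$ with $[p,q]\subseteq[x_i,y_i]$.

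\textbf{Local replacement.} Let $t^1,\dots,t^m$ ($m\ge 2$) be the tours of $T_1$ meeting $[x_i,y_i]$, with covered pieces $[p_j,q_j]$ ordered by $p_j$. By Lemma \ref{lemma: Only two tours disjoint per segment}, the portion of $[x_i,y_i]$ not covered by $T_2$ is a single interval $J$: all of the segment, $[x_i,r]$, or $[r,y_i]$. Set $p=\min_j p_j$ and $q=\max_j q_j$. Replace $t^1,\dots,t^m$ by the single tour $t^*=B\to p\to q\to B$.
\begin{itemize}
\item \emph{Coverage.} $t^*$ covers $[p,q]\supseteq \bigcup_j [p_j,q_j]$, so the union still covers $\alpha_S$.
\item \emph{Length of $t^*$.} Since $[p,q]\subseteq[x_i,y_i]$, the length $d(B,p)+(q-p)+d(q,B)$ is at most the length of the tour covering the whole segment. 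That tour has length $s_i'\le L$ by Step 3, and the same monotonicity as in the proof of the Step 3 lemma shows shrinking the interval only decreases the length. Hence $t^*$ is a feasible tour.
\item \emph{Total length of $T_1$ does not increase.} This is the triangle-inequality argument of Lemma \ref{lemma: Only two tours disjoint per segment}. Take the tour $t^a$ attaining $p$ and the tour $t^b$ attaining $q$.
  \begin{itemize}
  \item If $a\neq b$: $\ell(t^a)+\ell(t^b)\ge d(B,p)+(q_a-p)+2\,d(B,\cdot)+(q-p_b)+d(q,B)$, and the middle detour through $B$ is at least the gap or overlap correction, so this is $\ge \ell(t^*)$ by the triangle inequality.
  \item If $a=b$: then $\ell(t^*)\le \ell(t^a)$ directly.
  \end{itemize}
  The remaining tours are simply discarded, so $\ell(T_1')\le \ell(T_1)$.
\end{itemize}

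\textbf{Conclusion.} Doing this for every segment yields $T_1'$ in which each segment meets at most one tour of $T_1'$. Moreover $\max(\ell(T_1'),\ell(T_2))=\ell(T_2)$ is still the optimal makespan, so $\{T_1',T_2\}$ is optimal. Each tour of $T_1$ meets at most one segment, so bucketing the tours by segment and taking a min and max per bucket costs linear time overall. (If the segments are given sorted, no sorting is needed; otherwise the sort is $O(n\log n)$, and I would note this.)

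\textbf{Main obstacle.} The delicate point is the length inequality when $a\neq b$ and the pieces leave gaps inside $[p,q]$, since $t^*$ then flies over uncovered parts. Two facts are needed:
\begin{itemize}
\item The gap between two pieces is at most the extra cost $d(q_a,B)+d(B,p_b)$ paid by returning to base, by the triangle inequality $d(q_a,p_b)\le d(q_a,B)+d(B,p_b)$.
\item Intermediate tours only add nonnegative length.
\end{itemize}
I would write out the two-tour case carefully and then induct on $m$, merging the leftmost two tours at each step.
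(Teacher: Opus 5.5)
Your proposal is correct and follows the paper's own argument: you merge all tours of $T_1$ meeting a given segment into a single tour, using the triangle-inequality merging from the proof of Lemma \ref{lemma: Only two tours disjoint per segment}, and leave $T_2$ untouched. You are more careful than the paper on three points. You check that the merged tour has length at most $s_i'\le L$. You make the gap step explicit: the placeholder $2\,d(B,\cdot)$ should read $d(q_a,B)+d(B,p_b)\ge p_b-q_a$. Finally, the direct bound $\ell(t^a)+\ell(t^b)\ge \ell(t^*)$ already suffices, so the induction on the number of tours is unnecessary.
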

\begin{proof}
    Let $[x_i, y_i]$ be a segment that intersects $T_1$ in more than one tour. Assume without loss of generality that the sub-segment $[x_i, r]$ is covered by several tours of $T_1$.
    From the proof of the previous Lemma \ref{lemma: Only two tours disjoint per segment}, it is clear that we could merge in linear time such tours of $T_1$ to get only one tour that covers $[x_i, r]$, decreasing the total length of the new $T_1'$. \par
\end{proof}
From the previous Lemmas, we conclude:

\begin{lemma}\label{corollary: Broken segments are made only with two tours}
From every optimal solution $\{T_1, T_2\}$ of the $2$-Min-Makespan problem for $\alpha_S$ with $\ell(T_2) \ge \ell(T_1)$ we can find in linear time another optimal solution $\{T_1', T_2\}$  such that every segment $[x_i, y_i]$ is covered by at most two tours one from $T_1'$ and another from $T_2$ and such tours intersects in at most one point $r \in [x_i, y_i]$. Analogously for segments $[-y_i, -x_i]$. 
    
\end{lemma}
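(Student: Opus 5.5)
The statement follows by combining Lemma \ref{lemma: Only two tours disjoint per segment} with the merging lemma just proved. The plan has three steps.

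\emph{Step 1: apply the merging lemma.} Start from an arbitrary optimal solution $\{T_1,T_2\}$ with $\ell(T_2)\ge \ell(T_1)$. The merging lemma yields $\{T_1',T_2\}$ in which every segment meets at most one tour of $T_1'$. Each merge replaces several tours by one tour from the beginning point of the first to the ending point of the last. By the triangle inequality this does not increase length. By the construction it keeps length at most $L$, since the merged tour has the same extreme points as the $T_2$-style tour in Lemma \ref{lemma: Only two tours disjoint per segment}; if necessary I would bound its length by that of the tour covering the whole segment, which is $s_i'\le L$ by Step 3. Hence $\ell(T_1')\le \ell(T_1)\le \ell(T_2)$. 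The makespan is therefore unchanged, $\{T_1',T_2\}$ is still optimal, and the merges are done segment by segment in linear time.

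\emph{Step 2: at most one tour from each drone.} Lemma \ref{lemma: Only two tours disjoint per segment} gives at most one tour of $T_2$ on each segment, covering an end piece $[x_i,r]$ or $[r,y_i]$. Step 1 gives at most one tour of $T_1'$ on each segment. By Lemma \ref{lemma: no tour that intersects different segments}, no tour meets two segments, so these tours are attached to individual segments and each segment is covered by at most one tour from each drone.

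\emph{Step 3: the overlap is a single point.} If both drones touch $[x_i,y_i]$, the $T_2$ tour covers an interval containing an endpoint, say $[x_i,r]$. The $T_1'$ tour then covers a single interval, because a tour visiting one segment can be taken to traverse one subinterval of it. Coverage forces this interval to contain $[r,y_i]$. If it overlapped $[x_i,r]$ in more than a point, I would shrink it to exactly $[r,y_i]$. Doing so moves the entry point of the tour away from the base along the segment. Since the derivative of the length function $L_i(x)$ in Step 3 is negative, this strictly shortens the tour and keeps feasibility, so $\ell(T_1')$ does not increase and optimality is preserved. The left-hand segments $[-y_i,-x_i]$ are handled by symmetry.

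The main obstacle I expect is this final trimming step. One must check that shortening the $T_1'$ tour is valid and that the point $r$ is well defined when the $T_2$ tour covers the far end $[r,y_i]$ instead of the near end. I would handle that case by applying the same monotonicity argument to the exit point in place of the entry point.
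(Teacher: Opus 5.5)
Your proposal is correct and follows the paper's route: the paper obtains this lemma directly by combining Lemma~\ref{lemma: Only two tours disjoint per segment} with the merging lemma, and your Step~3 trimming via the monotonicity of $L_i$ just makes explicit the single-point overlap that the paper leaves implicit. The remaining loose ends are only degenerate cases: clip tours to the segment before merging so that your bound by $s_i'\le L$ applies, and if the $T_2$ tour already covers the entire segment, delete any leftover $T_1'$ tour there instead of trimming it.
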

The following result will show that in the optimal solution of the $2$-Min-Makespan problem for $\alpha_S$, there is only at most one broken segment.
\begin{lemma} \label{lemma: Only one broken segment}
    Let $\{T_1, T_2\}$ be two sets of tours that solve the $2$-Min-Makespan problem for $\alpha_S$. Then there is at most one broken segment.   
\end{lemma}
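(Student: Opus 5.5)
The plan is an exchange argument: if two segments were broken, we could re-cover both of them with strictly less length for \emph{both} drones, contradicting optimality.

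First we normalize. By Lemma \ref{corollary: Broken segments are made only with two tours}, from $\{T_1,T_2\}$ (with $\ell(T_2)\ge\ell(T_1)$) we obtain in linear time an optimal solution $\{T_1',T_2\}$ in which every segment is covered either by a single tour or, if broken, by exactly one tour of $T_1'$ and one tour of $T_2$. The replacement only shortens $T_1$. Moreover, the merging in that proof never makes a broken segment unbroken unless it removes all of $T_1$ from it. So it suffices to rule out two broken segments in a normalized optimal solution. By Lemma \ref{lemma: no tour that intersects different segments}, every tour meets only one segment, so each tour touching a broken segment is devoted entirely to that segment.

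Now suppose $a_i$ and $a_j$, with $i\neq j$, are both broken. Let $u_i,u_j\in T_1'$ and $v_i,v_j\in T_2$ be the corresponding partial tours. Each of these tours goes from $B$ to a point of the segment and back. By Lemma \ref{lemma: c_k bigger x} and the choice of $c_0$ in Step 1 (as in Remark \ref{Remmark: minimum length tour}), each has length at least $2c_0 > L-\frac{\epsilon L}{2n}$, hence strictly more than $\frac{2L}{3}$. Define a new solution:
\begin{itemize}
\item remove $u_i,u_j$ from $T_1'$ and add the full tour $t_i$ of Statement 1;
\item remove $v_i,v_j$ from $T_2$ and add the full tour $t_j$.
\end{itemize}
Both new tours are feasible, since $\ell(t_i)=s_i'\le L$ by Step 3, and together they cover $a_i$ and $a_j$ completely. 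The new lengths satisfy
\[
\ell(T_1')-\ell(u_i)-\ell(u_j)+\ell(t_i)\le \ell(T_1')-\tfrac{4L}{3}+L<\ell(T_1'),
\]
and, in the same way, the new length of $T_2$ is strictly less than $\ell(T_2)$. Both drones strictly decrease, so the makespan strictly decreases. This contradicts the optimality of $\{T_1,T_2\}$, so there is at most one broken segment.

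The main obstacle is the lower bound on each partial tour. We must justify that a tour covering any nonempty piece of a segment, even a single point, has length at least $2c_0$, or at least something exceeding $L/2$. My first attempt is to bound it below by twice the distance from $B$ to the nearest point of the segment. That point is far from $O$ by construction, since every $y_i$ lies beyond $y_0$, so the distance is at least about $c_0$. If only the weaker bound of twice the height is available, I still need each partial tour to exceed $L/2$, so that $\ell(u_i)+\ell(u_j)>L\ge\ell(t_i)$. This is exactly what the geometry of Steps 1 and 3 is designed to guarantee.
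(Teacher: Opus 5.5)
Your proposal is correct and follows the paper's argument: you use the same exchange, in which one drone receives the full tour of one broken segment and the other drone the full tour of the other, so both drones strictly shorten. Your estimate that two partial tours exceed $4L/3>L\ge s_i'$ is a cleaner form of the paper's explicit inequalities, which rest on the same fact $b+d>2L/3$. The obstacle you flag is immediate: every point $p$ of a segment satisfies $|p|\ge x_i\ge y_{i-1}\ge y_0$, because $x_i\in[y_{i-1},y_i]$ by the Step 3 lemma, so any tour touching it has length at least $2c_0>2L/3$. Even the crude bound of twice the height $L/3$ of $B$ already exceeds $L/2$, which is all your exchange needs.
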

\begin{proof}
The idea of the proof follows from the fact that if there were two segments then the solution could be improved by merging tours that cover the same segment, as can be easily checked by comparing the two distributions of Figure  \ref{fig: Two broken segments}. We will just do the calculations. 
\begin{figure}[ht]
   \centering
   \begin{tabular}{c c c}
 \includegraphics[width=.45\textwidth,page=3]{images/cutting_1.pdf} 
  & &
 \includegraphics[width=.45\textwidth,page=4]{images/cutting_1.pdf} 
    \end{tabular}
    \caption{(a) Two broken segments.  (b) Reassigning the tours, the global length is improved.}
    \label{fig: Two broken segments}
\end{figure}

Assume that there is a pair of broken segments $[x, y]$ and $[x',y']$. By Lemma \ref{corollary: Broken segments are made only with two tours}  each one is covered with two tours: one tour from $T_1$ and another tour from $T_2$ (without loss of generality we assume that both segments are on the same side). By Lemma \ref{lemma: Only two tours disjoint per segment} there is a point $r \in [x, y]$ and another one $r' \in [x', y']$ such that the tours\footnote{In order to make easier the reading of this proof we abuse of notation and define the tours by the segments not by the vertices, for example, the tour $t_1$ should be defined by $t = BxrB$ with $b = d(B, x)$, $r-x = d(r, x)$ and $d = d(r, B)$, but we define it by $t_1 = b(r-x)d$.}  $t_1  = b(r-x)d$ and $t_2 = d(y-r)c$ belong to different drones with $b = d(B, x)$, $d = d(B,r)$ and $c = d(B, y)$ (resp. $t_1' = b'(r'-x')d'$ and $t_2' = d'(y'-r')c'$). We will do the case $t_1, t_1' \in T_1$ and $t_2, t_2' \in T_2$. The other case is analogous and it is left to the reader.

Let $\ell(T_1) = M_1 + b + r-x + d + b' +r'-x' +d'$  and $\ell(T_2) = M_2+ d + y-r+c + d'+y'-r'+c'$  for some $M_1$ and $M_2$ non negative numbers. Assume without loss of generality that $\ell(T_2) \ge \ell(T_1)$, and assume also that $b +y-x + c \ge b' +y'-x' + c'$. Then we build $T_2'$ by removing from $T_2$ the tours $t_2$ and $t_2'$ and adding the tour $t_2'' = b(y-x)c$  and $T_1'$ by removing from $T_1$ the tours $t_1$ and $t_1'$ and adding the tour $t_1'' = b'(y'-x')c'$. We get that the new lengths are $\ell(T_1') = M_1 + b'+ y'-x'+c'$ and $\ell(T_2') = M_2 +b'+ y-x+c$. But this would imply: 
\begin{equation}
        \notag
        \ell(T_1)-\ell(T_1')  =  b +r-x +d +r' +d'-y'-c'>0
    \end{equation}
   Last inequality holds as $d' > y'$ by Lemma \ref{lemma: c_k bigger x} and $b + d > c'$ as $c'< L/2$ and $b+d > 2L/3$.
Also, we get: 
\begin{equation}
        \notag
        \ell(T_2)-\ell(T_2')  = d-r+d'+y'-r'+c'- b +x>0
    \end{equation}
Again we get the last inequality because $d > b$ and $d'> r$ by Lemma \ref{lemma: c_k bigger x}. Therefore $\ell(T_1) > \ell(T_1')$ and $\ell(T_2) > \ell(T_2')$ what contradicts the minimality of $\{T_1, T_2\}$.  
\end{proof}

We also require the following technical Lemma to prove Statement 2:

\begin{lemma}\label{lemma: subset of the same cardinality less L/3}
Let $T = \{t_i\}_{i= 1}^{2n}$ be the solution of Minsum problem for $\alpha_S$.    Let $T_1'$ and $T_2'$ be two (disjoint) subsets of $T$ of the same cardinality, then $|\ell(T_2')-\ell(T_1')| < \epsilon L$, where $\epsilon$ was defined in the Step 1  of the construction of $\alpha_S$.
\end{lemma}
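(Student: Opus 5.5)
The plan is to write the length of every tour in $T$ as a common constant plus a small nonnegative excess, and then use Lemma~\ref{lemma: Condition of cardinality} to bound the total excess. By the Remark after Statement~1, each tour $t_i\in T$ covers exactly the $i$-th segment and has length $\ell(t_i)=s_i'=Ks_i+2c_{2n}$, where $K=\frac{L-2c_{2n}}{R}$ and $R=\max_i s_i$. Since $0<s_i\le R$, we get
\[
2c_{2n}<\ell(t_i)\le 2c_{2n}+K R = L .
\]
So every tour length lies in the half-open interval $(2c_{2n},L]$, whose width is $L-2c_{2n}$. This matches the range established in the proof of the lemma on Step~3.

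Now let $T_1',T_2'\subset T$ be disjoint with $|T_1'|=|T_2'|=m$. Disjointness gives $2m\le 2n$, so $m\le n$. Write each length as $\ell(t_i)=2c_{2n}+\delta_i$ with $0<\delta_i\le L-2c_{2n}$. The constant term $2c_{2n}$ contributes $m\cdot 2c_{2n}$ to both $\ell(T_1')$ and $\ell(T_2')$, so it cancels:
\[
|\ell(T_2')-\ell(T_1')|=\Bigl|\sum_{t_i\in T_2'}\delta_i-\sum_{t_j\in T_1'}\delta_j\Bigr| .
\]
Both sums are nonnegative, so their difference is at most the larger of the two. Each sum is at most $m(L-2c_{2n})$. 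Hence
\[
|\ell(T_2')-\ell(T_1')|\le m(L-2c_{2n})\le n(L-2c_{2n}) .
\]

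Finally, Lemma~\ref{lemma: Condition of cardinality} gives $2n(L-2c_{2n})<\epsilon L$. Therefore
\[
n(L-2c_{2n})<\tfrac{\epsilon L}{2}<\epsilon L ,
\]
which proves the claim, with a factor of two to spare. If the subsets were not required to be disjoint, we would only have $m\le 2n$, and the same argument would still give $|\ell(T_2')-\ell(T_1')|\le 2n(L-2c_{2n})<\epsilon L$. So the lemma holds even without disjointness.

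No step here is a real obstacle. The only point needing care is using the equal cardinality of $T_1'$ and $T_2'$ to cancel the large common term $2c_{2n}$; without that cancellation the difference would be of order $L$, not of order $\epsilon L$.
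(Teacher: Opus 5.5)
Your proof is correct and follows essentially the paper's argument. Both use that every tour length $s_i'=Ks_i+2c_{2n}$ lies in $(2c_{2n},L]$. Hence two equal-size families of $k\le n$ tours differ in total length by at most $k(L-2c_{2n})$, and the condition $2n(L-2c_{2n})<\epsilon L$ from Step~1 finishes the proof. Your cancellation of the common term is just a rephrasing of the paper's bounds $\ell(T_2')\le kL$ and $\ell(T_1')\ge 2kc_{2n}$; your extra remarks (the sharper bound $\epsilon L/2$, and that disjointness is not needed) are correct but inessential.
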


\begin{proof}
    Assume without loss of generality that $\ell(T_2') \ge \ell(T_1')$ and let $k \le  n$ be the cardinality of $T_1'$ and $T_2'$. As $s_i' \in (2c_{2n}, L]$. Then $\ell(T_2') = \sum_{i\in T_2'} s_i' \le k\cdot L$ and $\ell(T_1') = \sum_{i\in T_1'} s_i' \ge k \cdot 2c_{2n}$. Therefore: 
    \begin{equation}
        \notag
        \ell(T_2')-\ell(T_1') \le k(L- 2c_{2n}) \le 2n(L-2c_{2n}) < \epsilon L
    \end{equation}
    The last inequality follows from Lemma \ref{lemma: Condition of cardinality}.
\end{proof}
Finally, we get {\bf Statement $2$}:
\begin{proof}[{\bf Proof of Statement 2.}] 
From Lemma \ref{lemma: Only one broken segment}, we have that $T_1$ and $T_2$ are composed of tours from $T$, except for just two of them defined by a broken segment. We are going to show that it is impossible that such a broken segment exists. Assume, for the sake of contradiction, that such an element exists and let $t_1 \in T_1$ and $t_2 \in T_2$ be the tours that cover the subsegments of the broken segment. Assume also that $\ell(T_2) \ge \ell(T_1)$.  Let $T_2' = T_2 \setminus t_2$ and $T_1' = T_1 \setminus t_1$. Notice that $\ell(T_2) = \ell(T_2') + \ell(t_2)$, $\ell(T_1) = \ell(T_1') + \ell(t_1)$ and  $ \frac{2L}{3}< \ell(t_i) \le L$ for $i =1, 2$.  
    \par
First, we claim that the cardinalities satisfy  $|T_1'| \ge |T_2'|$, otherwise, if $|T_1'| < |T_2'|$, there would be a proper subset of tours $T_2'' \subset T_2'$ such that $|T_2''| =|T_1'|$, then  by Lemma \ref{lemma: subset of the same cardinality less L/3}, we would get that $|\ell(T_1') - \ell(T_2'')| < \epsilon L$. By Remark \ref{Remmark: minimum length tour}, each tour  $t \in T_2' \setminus T_2''$ would add at least $\frac{2 L}{3}$ to the total length of $T_2'$, then:
    \begin{equation*}
    \notag
        \ell(T_2') > \ell(T_1') + \frac{2 L}{3}- \epsilon L
    \end{equation*}    
     \begin{equation*}
     \notag
         \ell(T_2) -\ell(T_2') > \frac{2L}{3}
     \end{equation*}
Now we can remove $t_2$ from $T_2$ and merge it with $t_1$ in $T_1$ to get just one tour $t$  and we will get that the new solution would be the maximum of $\ell(T_2')$ and $\ell(T_1') + \ell(t)$. If $\ell(T_2')$ is the maximum, we have improved $\ell(T_2)$, which is a contradiction with the minimality of $T_2$. Now, if $\ell(T_1')+\ell(t)$ is the maximum, it would exceed $\ell(T_2')$ by at most:
    \begin{equation}
        \notag
        \ell(T_1')+\ell(t) - \ell(T_2') < \ell(t) + \epsilon L-\dfrac{2 L}{3} \le (1+\epsilon)L-\dfrac{2 L}{3} < \dfrac{2L}{3} < \ell(T_2)- \ell(T_2')
    \end{equation} 
    That implies that $\ell(T_1')+\ell(t) < \ell(T_2)$ and again, it contradicts the minimality of $T_2$.
    \par 
    So we have proved that $|T_1'| \ge |T_2'|$, but as the number of segments is even and all the tours of $T_1'$ and $T_2'$ are all tours of the one drone solution except one (the broken segment), it can not happen $|T_1'| = |T_2'|$. We get $|T_1'| > |T_2'|$. Again, let $T_1''$ be a subset of tours of $T_1'$ of the same cardinality as $T_2'$. Using analogous reasoning as in the previous paragraph we get:
    \begin{equation*}
        \notag
        \ell(T_1')> \ell(T_2')+\frac{2 L}{3}- \epsilon L
    \end{equation*}
    what would imply 
    \begin{equation*}
      \notag
        \ell(T_1) > \ell(T_1') +\dfrac{2L}{3} >\ell(T_2') + \frac{4 L}{3}- \epsilon L
    \end{equation*} 
    and then, as  $\ell(T_2') + L \ge \ell(T_2)$, we have: 
    \begin{equation*}
        \notag
        \ell(T_1) >  \ell(T_2) + \frac{L}{3}- \epsilon L > \ell(T_2)
    \end{equation*} 
    But we have assumed $\ell(T_2) \ge \ell(T_1)$, which leads to a contradiction.
\end{proof}

\section{Approximation algorithms}\label{section: Aproximation algorithm}

In this section, we present a simple yet competitive approximation algorithm, followed by an improved version of it. As a starting set of tours, we use the optimal solution of the minsum problem proposed in \cite{bereg2024covering}:

Given a set of segments on a straight line and an upper bound $L>0$ for the tour lengths, the goal of the minsum problem is to compute a set of tours to cover all segments so that the total length is minimized. The authors show that the problem can be solved in polynomial time using a dynamic programming approach.

\subparagraph{Greedy tour distribution for two drones (\textsc{G2D-algorithm}):}
Let $T = \{t_1, \cdots, t_m\}$ be an optimal set of tours for the minsum problem.
The \textsc{G2D-}algorithm simply distributes the tours of $T$ into two sets, $T_1$ and $T_2$, so that, in each step, the difference between the sum of the lengths of the tours in each set is $| \ell(T_2)-\ell(T_1)|  = aL$ {\color{red}for some}
$0\le a\le 1$. To do it, given $T = \{t_1, \cdots, t_m\}$, add for $i= 1$ the tour $t_1$ to $T_1$ and, in each subsequent step $i>1$, add the tour $t_i \in T$ to the set $T_1$ or $T_2$ with minimum total length.

\begin{Rem}\label{obs1}
    Assume, without lost of generality, that $\ell(T_2) \ge \ell(T_1)$ in G2D-algorithm; then, for some $a \in [0, 1]$,  $\ell(T_2) = \ell(T_1) +aL$. 
\end{Rem}
The interesting fact about this simple greedy algorithm is that it is output-sensitive and it behaves nicely in practical situations. The following result shows the approximation factor:
\begin{theorem} \label{Theorem: Aproximation Theorem}Let $\{T^*_1, T^*_2\}$ be any  solution of the $2$-Min-Makespan  problem and let $\{T_1, T_2\}$  be the final distribution of the \textsc{G2D-}algorithm. Assume wlog that $\ell(T_1) \le \ell(T_2)$ and $\ell(T^*_1) \le \ell(T^*_2)$. Then:
\begin{itemize}
    \item[a)]  $\ell(T_1)+\dfrac{aL}{2} \le \ell(T^*_2) \le \ell(T_1) +aL $, where $a \in [0, 1]$.
    \item[b)] If $a = 0$, then $\{T_1, T_2\}$ is optimal for the $2$-Min-Makespan problem.
    \item [c)] If $a \in (0, 1]$, then $\ell(T_2) \le \Delta \cdot \ell(T_2^*)$, with $\Delta = \dfrac{\Gamma+2}{\Gamma+1}$ and $\Gamma = \dfrac{ 2\ell(T_1)}{aL}$. 
\end{itemize}
\end{theorem}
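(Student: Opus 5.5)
The plan is to sandwich $\ell(T^*_2)$ between a lower bound coming from the optimality of the Minsum solution and an upper bound coming from the feasibility of the \textsc{G2D} output. Everything else then follows by algebra.

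First I will justify Remark~\ref{obs1}, i.e.\ that the final gap satisfies $a\in[0,1]$, by induction on the steps of \textsc{G2D}.
\begin{itemize}
\item Initially the gap is $\ell(t_1)\le L$.
\item If at some step the gap is $g\in[0,L]$, the new tour $t_i$ (with $\ell(t_i)\le L$) is added to the lighter set.
\item The new gap is then $|g-\ell(t_i)|$, and since $g,\ell(t_i)\in[0,L]$ this is again at most $L$.
\end{itemize}
Hence $\ell(T_2)=\ell(T_1)+aL$ with $a\in[0,1]$, where $T_2$ denotes the heavier set.

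For a), the upper bound is immediate. $\{T_1,T_2\}$ is a feasible solution of the $2$-Min-Makespan problem: every tour has length at most $L$ and $T_1\cup T_2=T$ covers $\alpha$. Therefore $\ell(T^*_2)\le \ell(T_2)=\ell(T_1)+aL$.

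For the lower bound in a), I will observe that $T^*_1\cup T^*_2$ is a feasible solution of the Minsum problem for a single drone. It is a set of tours from $B$, each of length at most $L$, whose union covers $\alpha$. Since $T$ is optimal for Minsum,
\[
\ell(T^*_1)+\ell(T^*_2)\ \ge\ \ell(T)=\ell(T_1)+\ell(T_2)=2\ell(T_1)+aL .
\]
Because $\ell(T^*_2)\ge \ell(T^*_1)$, this gives $2\ell(T^*_2)\ge 2\ell(T_1)+aL$, that is, $\ell(T^*_2)\ge \ell(T_1)+aL/2$. This step is the only one needing care. I must make sure that merging the two drones' tour sets does not violate any constraint of the Minsum problem; it does not, because that problem only imposes the per-tour bound $L$ and the covering condition.

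For b), setting $a=0$ makes the two bounds of a) coincide, so $\ell(T^*_2)=\ell(T_1)=\ell(T_2)$ and the \textsc{G2D} output is optimal.

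For c), with $a>0$ I combine $\ell(T_2)=\ell(T_1)+aL$ with the lower bound of a):
\[
\frac{\ell(T_2)}{\ell(T^*_2)}\le \frac{\ell(T_1)+aL}{\ell(T_1)+aL/2}.
\]
Dividing numerator and denominator by $aL/2$ turns this into $\frac{\Gamma+2}{\Gamma+1}=\Delta$ with $\Gamma=2\ell(T_1)/(aL)$, which is the claimed bound.
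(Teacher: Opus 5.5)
Your proposal is correct and follows essentially the same route as the paper. The lower bound in a) comes from $T^*_1\cup T^*_2$ being a feasible Minsum solution, so $2\ell(T^*_2)\ge \ell(T^*_1)+\ell(T^*_2)\ge \ell(T)=2\ell(T_1)+aL$ (the paper phrases this as a contradiction), and b) and c) then follow by the same algebra; beyond the paper, you also spell out the inductive justification that $a\in[0,1]$ and the upper bound $\ell(T^*_2)\le \ell(T_2)$ from feasibility of the \textsc{G2D} output, both of which the paper leaves implicit.
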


\begin{proof}
\begin{itemize}
   \item[ \emph{a)}] If $\ell(T^*_2) < \ell(T_1)+\dfrac{aL}{2}$, using Remark \ref{obs1} we have: 
\begin{eqnarray*} \ell(T_2^*)+\ell(T_1^*)< 2 \ell(T_1) +aL = \ell(T_2) + \ell(T_1) = \ell(T)
    \end{eqnarray*}
which contradicts the minimality of $T$.
    \par
\item[ \emph{b)}]  
Follows trivially from a). 
\par
\item[ \emph{c)}] 
Let $\Delta= \dfrac{\ell(T_1)+aL}{\ell(T_1) + \frac{aL}{2}}$. From a) we get: 
\begin{equation*}
\notag
        \ell(T_2) = \ell(T_1) + aL = \Delta\cdot \left(\ell(T_1) + \dfrac{aL}{2}\right)\le \Delta \cdot \ell(T_2^*).
\end{equation*}
\par
Therefore, making $\Gamma =\dfrac{ 2\ell(T_1)}{\ell(T_2)-\ell(T_1)} = \dfrac{2\ell(T_1)}{aL}$, we obtain c).
\end{itemize}
\end{proof}

\begin{Rem}
The computational time of \textsc{G2D-}algorithm is essentially given by the computation of the set $T$. Thus, according to \cite{bereg2024covering}, \textsc{G2D-}algorithm computes $T_1$ and $T_2$ in $O(n^2+nm)$ time,  where $n$ is the number of segments and $m$ the total number of tours.  
\end{Rem}
\begin{Rem}
Notice that when $\ell(T_1)$ increases then $\Delta$ goes to $1$. This situation can arise when there are many segments and $L$ is small.
In particular, note that if $\ell(T_1) \ge 3L$, then $\Delta \le 1.14$. Therefore, this approximation algorithm would perform reasonably well in practice, especially in scenarios where multiple tours are required. 
\end{Rem}

{\color{violet}HIGES}

\begin{Rem}\textsc{G2D-}algorithm  can be generalized for $k>2$ drones while maintaining the same time complexity; we just have to properly distribute the $m$ tours of $T$ into $k$ sets $T_1, \cdots, T_k$. It is easy to see that the approximation factor is $\Delta = \dfrac{\Gamma+k}{\Gamma+1}$ with $\Gamma = \dfrac{k\cdot \ell(T_1)}{\ell(T_k)-\ell(T_1)}$, where $T_1$ is the set with minimum total length and $T_k$ the set with maximum total length.
\end{Rem}

\subsection{An improved approximation}
After applying the G2D-algorithm to $n$ segments, we propose two strategies for improvement.
\begin{itemize}
    \item {\bf Cutting Strategy (CS):} Let $\{T_1, T_2\}$ be the set of tours obtained by the G2D-algorithm. Assume wlog that $\ell(T_2) > \ell(T_1)$. Let $t \in T_2$ be a tour that covers $[x, y]$, $t= BxyB$ (notice that $[x, y]$ is not necessarily one of the original segments). If $2\cdot \min\{d(B, x), d(B, y)\} < \ell(T_2)- \ell(T_1)$, 
    we can ``\emph{cut}'' $t$ by computing a point $r \in [x, y]$  such that the tours  $t' = BxrB$ and $t'' = BryB$ can be added, one to $T_1$ and the other to $T_2$, thereby improving the G2D-solution· For example  $T_1' = T_1 \cup t''$  and $T_2' = (T_2 \setminus t) \cup t'$ and $\ell(T_2') \ge \ell(T_1')$. See Figure \ref{fig: cutting and enlarging}(a) for an example. 
    \item {\bf Enlarging Strategy (ES):} Let $\{T_1, T_2\}$ be the set of tours obtained by applying G2D-algorithm. Assume wlog that $\ell(T_2) > \ell(T_1)$. Let $t_1\in T_1$ and $t_2 \in T_2$ be two tours such that $t_1 = BxyB$,  $t_2 = ByzB$, and $\ell(t_1) < L$. If such pair of tours exists, we can ``\emph{enlarge}'' the tour $t_1$ and reduce the tour $t_2$, improving the G2D-solution. To achieve this, compute an $r \in [y, z]$ such that the tour $t_1' = BxrB$ satisfies $\ell(t_1')\le L$. Then, set $T_1' = (T_1 \setminus t_1) \cup t_1'$ and $T_2' = (T_2 \setminus t_2) \cup t_2'$, where $t_2' = BryB$, ensuring that $\ell(T_1')\le \ell(T_2') < \ell(T_2)$. See Figure \ref{fig: cutting and enlarging} (b) to illustrate the strategy.     
\end{itemize}
\begin{figure}[ht]
\begin{tabular}{cc}
\hspace{-0.5cm}
\includegraphics[width=0.5\textwidth, page=7]{images/figuras.pdf}& \hspace{-1cm}
\includegraphics[width=0.5\textwidth, page=9]{images/figuras.pdf} \end{tabular}
\caption{(a) Cutting a red tour. (b)  Enlarging a blue tour and reducing a red tour.}
\label{fig: cutting and enlarging}
\end{figure}
With these two strategies we proceed as follows: First, we apply the G2D-algorithm to obtain an initial solution. Next, we evaluate all possible candidates for cutting and enlarging, which can be computed in linear time. Finally, we implement the strategy that yields the most significant improvement in the maximum total length. If we get the equality in the total length of the two sets of tours, we stop the algorithm; otherwise, we check the remaining candidates for enlarging or cutting and take again the best improvement. In each step, either we stop or the number of candidates for enlarging is reduced by one. It is important to note that the cutting strategy ensures equality in the total length of the set of tours, as the maximum difference between them is at most $L$.
This procedure takes $O(n^2)$ time, leading to an overall time complexity for this improvement of $O(n^2+nm)$ where $n$ is the number of segments and $m$ the number of obtained tours.

\section{Computational experiments}\label{sec:comp}
We aim to evaluate the effectiveness of the G2D-algorithm 
and the improvement
with the cutting and enlarging strategies. For the sake of comparison, we will consider a discrete version of the problem in which the segments are subdivided into smaller subintervals. The tours can partition the intervals using the endpoints of these subintervals.   We will solve this digitalized version using a Mixed-Integer Linear Programming (MILP) formulation. 
Note that a solution for the discrete version simulates an optimal solution for the 2-Min-Makespan problem when the intervals are sufficiently small.
\par

Let $\alpha$ be a set of segments. Take $\delta >0$  as the level of discretization, and assume, without loss of generality, that for every segment $[x_i, y_i] \in \alpha$, the number $\frac{y_i-x_i}{\delta}$ is a positive integer. Now, we will identify the main variables and parameters for the MILP formulation. For the sake of generality, the formulation for the $k$-Min-Makespan problem is presented:
 
\begin{itemize}
    \item Let $\alpha$ be a set of segments. We define a discrete sequence of points $A=\{a_i\}_{i= 1}^n$ as follows\footnote{We will abuse of notation by considering $a_i$ simultaneously a point in the horizontal line of the form $(a_i, 0)$ and its first coordinate.}. Let $a_1$ be the left vertex of the segment that is furthest to the left. Given $a_i$, if $a_i$ is not a right vertex of a segment of $\alpha$, then set $a_{i+1} = a_i+\delta$; otherwise, $a_i$ is a right vertex and let $a_{i+1}$ be the left vertex of the segment to the right of $a_i$.
    \item $B$ will be the base point and $L>0$ the maximum length of a tour.
    \item $\beta_i$ will be the distance from the base $B$ to the point $a_i$.
\end{itemize}
Assume that we have $k$ drones. We make the following restriction:  every tour has to begin in an $a_i$ and end in an $a_j$ with $j>i$. Consider a feasible solution  $\{T_1, \cdots, T_k\}$ of the $k$-min-makespan problem. We define the following variables and parameters:  
\begin{itemize}
    \item For each $i, j \in \{1, \cdots , n\}$ and for each $d\in \{1, \cdots, k\}$, we define the binary decision variables $Z=\{z_{ij}^d\}$, such that $z_{ij}^d = 1$ if and only if there is a tour $t \in T_d$ that begins in $a_i$ and ends in $a_j$ 
    \item For each $r\in \{1, \cdots, n\}$ such that $a_r$ is not a right vertex of a segment, we define the binary decision variable $s_{r(r+1)}^d$ such that $s_{r(r+1)}^d = 1$ if and only if there is a tour $t \in T_d$ such that the segment $[a_r, a_{r+1}] \subset t$.
    \item For simplicity, we will denote by $c_{ij}$ the length of a tour that begins at $a_i$ and ends at $a_j$, $c_{ij} = (\beta_i+\beta_j+a_j-a_i)$. 
    \item Let $T$ be a continuous variable that we aim to minimize in the formulation.
\end{itemize}
With these notations, the $k$-Min-Makespan problem can be formulated as follows: 
\begin{align}
\label{eq:f.0} \mbox{minimize}~& \;T \\
\label{eq:f.1} \mbox{s.t. } & \sum_{i,j=1}^n z_{ij}^d c_{ij} \leq T & \forall d\in \{1, \cdots, k\}\\
\label{eq:f.2}& z_{ij}^d c_{ij}  \leq L & \forall d\in \{1, \cdots, k\}, \forall i, j \in \{1, \cdots, n\}\\
\label{eq:f.3}& s_{q(q+1)}^d = \sum_{i\leq q < j}^n z_{ij}^d & \forall d\in \{1, \cdots, k\}, \forall i, j \in \{1, \cdots, n\}\\
\label{eq:f.4}&\sum_{d=1}^m s_{r(r+1)}^d=1 & \forall r \in {1, \cdots, n} \text{ such that } r \text{ is not an ending point.}
\end{align}

Equation \ref{eq:f.1} guarantees that $T$ will be the maximum length among all the lengths of the tours $\{T_1, \cdots, T_m\}$ for an optimal solution. 
Equation \ref{eq:f.2} limits the length of each tour by $L$. 
Equations \ref{eq:f.3} and \ref{eq:f.4} are needed to guarantee that all segments are covered and that each subinterval $[a_r, a_{r+1}]$ is covered exactly with one drone, respectively.

\par
In the follows, we present a systematic performance comparison between optimal solutions obtained via MILP formulation (Gurobi™ solver) on the discretized problem and the heuristic solutions generated by the G2D-algorithm in both its baseline implementation and the improved version with cutting/enlarging strategies.
We will call this improvement the \emph{ improved algorithm}. The idea is to test the behavior of our algorithms under different practical conditions. Such conditions will appear after the modification of different variables. In particular, we will study three variables: the \emph{maximum length} $L$ of a tour, the \emph{homogeneity} in segment length distribution, and the segment \emph{density}. These last two variables are of practical interest in some application fields, as for example, in power plant inspections. We will elaborate on the details of each variable in the description of each set of experiments. 

In the experiments, we randomly generate a set of segments $\{[x_i, y_i]\}_{i=1}^{N}$ within an interval $[m, M]$. The discretization level $d$ is defined as the maximum possible distance between the left endpoint $x_i$ of the leftmost segment and the right endpoint $y_j$ of the rightmost segment.
The base is positioned at $(X_b, Y_b)$, with all segments lying along the line $Y= 0$.
The drone’s maximum length $L$ is randomly selected from $[L_{min}, L_{max}]$, where:

\begin{itemize}
    \item $L_{min} = 2\max\{d(B, a_1), d(B, a_n)\}$ (the minimum length required to reach the farthest segment and return).
    \item $L_{max} = d(B, a_1) + a_n - a_1 + d(B, a_n)$ (the minimum length needed to cover the entire line in one trip).
\end{itemize}

\subsection{Experiment 1}
In the first experiment, we investigate how the segment homogeneity and the maximum tour length $L$ influence the approximation factor.
We generated a set of $987$ scenarios in which $m = -1000$, $M = 1000$ and $d = 1000$. 
Note that while individual segments may lie anywhere within $[-1000,1000]$, the maximum distance between the leftmost and rightmost endpoints of any two segments is constrained to 1000.
$L$ was randomly generated between $[L_{min}, L_{max}]$, $X_b$ between $[0, 1000]$ and $Y_b$ between $[1, 10000]$. 
Segment homogeneity is measured as the coefficient of variation $CV$ of the length of the segments. Three levels of $CV$ are considered: high homogeneity $CV = 0.2$ ($429$ cases), medium homogeneity $CV = 0.6$ ($278$ cases), low homogeneity $CV = 0.8$ ($280$ cases). 
The segment length distribution was designed with a fixed mean of $10$.

In this experiment, we evaluate two metrics: the approximation factor of the algorithm $G2D$, $\Delta$, and the approximation factor of the improved algorithm, $\Delta_I$. If $T_2$ is the solution of one of our algorithms and $T_2*$ is the solution obtained from the solver, the approximation factor is measured as the quotient $\ell(T_2)/\ell(T_2*)$. \par
In Figure \ref{fig:apr1}a) it is shown the distribution of $\Delta$ for the G2D algorithm. The optimal $\Delta = 1$ is achieved in $10.84\%$ of cases with a mean of $1.094$ and a standard deviation of $0.11$. Only $5\%$ of the cases exceed $1.314$ with a maximum of $1.831$. 
Figure \ref{fig:apr1}b) illustrates the distribution of $\Delta_I$ for the improved algorithm. The optimal $\Delta = 1$ is achieved in $16.31\%$ of the cases with a mean of $1.04$ and a standard deviation of $0.045$. Only $5\%$ of the cases exceed $1.1288$ with a maximum of $1.195$. The mean $\Delta_I$ of the improved  algorithm is significantly lower than the mean $\Delta$ of the G2D-algorithm ($t-Student = 13.93$, $p< 0.001$). 

\par
We conducted a one-way ANOVA to quantify the effect of segment homogeneity on algorithm performance metrics.
The homogeneity does not affect significantly neither $\Delta$ ($F = 0.458$, $p > 0.1$) 
nor $\Delta_I$ ($F = 0.146$, $p > 0.1$). \par

\begin{figure}[ht]
\begin{tabular}{cc}

\includegraphics[width=0.45\textwidth, page=1]{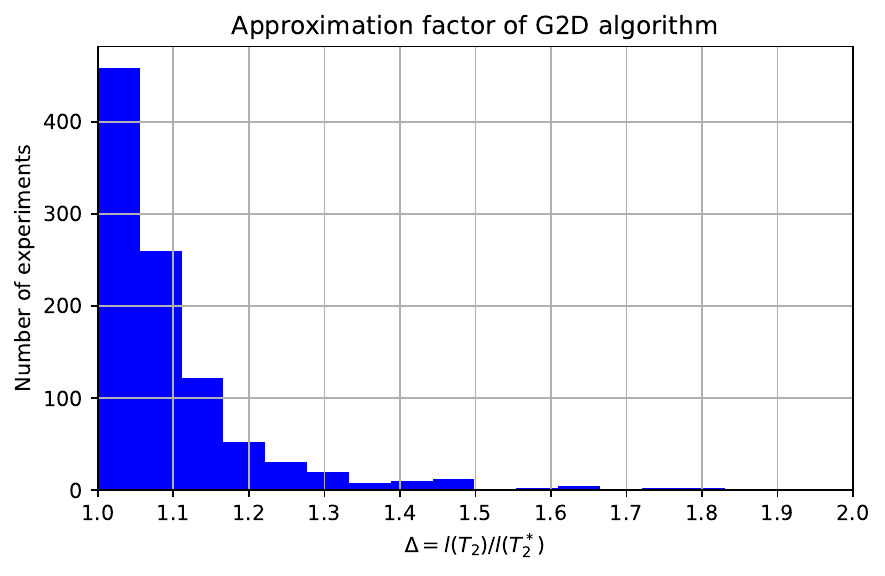}& \hspace{0.2cm}
\includegraphics[width=0.45\textwidth, page=1]{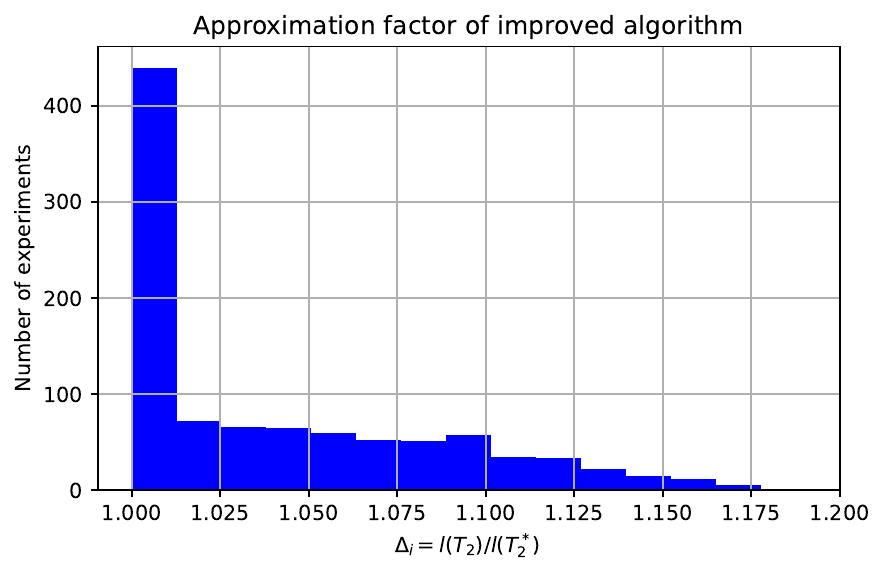} 
 \\
(a) & (b)
\end{tabular}
\caption{a) Approximation factor $\Delta$, where $\ell(T_2)$ is a solution received from G2D-algorithm and $\ell(T_2^*)$ is an optimal solution found by Gurobi™.
(b) Approximation factor $\Delta_I$, where $\ell(T_2)$ is a solution received after applying the Cutting and Enlarging strategies.}
\label{fig:apr1}
\end{figure}

To analyze how $\Delta$ and $\Delta_I$ vary with $L$, we rely on Theorem \ref{Theorem: Aproximation Theorem}(c), which predicts that $\Delta$ grows monotonically with $L$. 
There is no expected pattern for $\Delta_I$. 
$L$ is normalized by defining a scaled parameter:
 $L' = \frac{L}{L_{max}}$. 
Notice that $L'$ is bounded by $1$ and can be close to $0$ depending on $Y_b$ and the configuration of the segments. This modification is required as the position of the base varies randomly. With abuse of notation, from now, we refer to this new $L'$ as $L$. As expected, Figure \ref{fig:apryL}a) shows that when the maximum length of a tour $L$ increases, the aproximation factor of the G2D-algorithm increases.  
Figure \ref{fig:apryL}b) reveals no monotonic relationship between the improved algorithm's approximation factor $\Delta_I$ and tour length $L$, contrasting with the baseline's clear trend.

\par

\begin{figure}[ht]
   \centering
\includegraphics[width=0.7\textwidth]{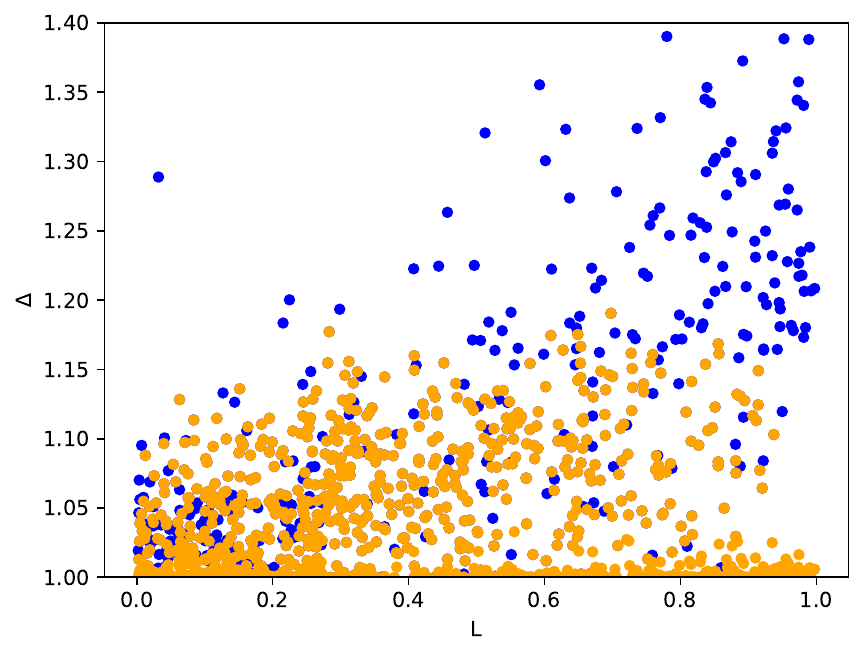}    \caption{Changes in the approximation factor of the G2D-algorithm when the maximum length of a tour L increases (in blue), and changes of the improved algorithm (in orange).}
    \label{fig:apryL}
\end{figure}

To further investigate the impact of tour length, we operationalized $L$ by dichotomizing it into two distinct experimental levels:
low $L$ (if $L < 0.5$) and high $L$ (if $L \ge 0.5$).
We now investigate how the dichotomized drone length $L$ affects both approximation factors $\Delta$ and $\Delta_I$, while also examining potential interaction effects with segment homogeneity.
We perform two $3 \times 2$ ANOVA test, one on $\Delta$ and another on $\Delta_I$ with at least $50$ experiments per group. In both cases, $\Delta$ and $\Delta_I$, the dichotomized factor $L$ significantly affected the approximation factor ($F = 232.110$, $p< 0.001$ for $\Delta$ and $F = 25.272$, $p< 0.001$ for $\Delta_I$).  Table \ref{Table: Llowhigh} shows the means of the approximation factors for low and high $L$. For both algorithms (greedy and improved), we observe a monotonic relationship between drone length $L$ and approximation performance.

 \begin{table}[ht]
\centering
\begin{tabular}{ | m{4cm} | m{4cm}| m{4cm} | } 
  \hline
 Approximation factor & low $L$ \par $N = 485$  & high $L$ \par $N = 502$\\ 
  \hline
  $\Delta$ &  $\bar{\Delta} =1.044 $ \par $\sigma_{\Delta} =  0.041$  &  $\bar{\Delta} =1.142 $ \par $\sigma_{\Delta}=0.137 $  \\ 
  \hline
  $\Delta_I$ &  $\bar{\Delta_I} =1.033  $ \par $\sigma_{\Delta_I} =  0.038$  &  $\bar{\Delta_I} =1.047$ \par $\sigma_{\Delta_I}= 0.05$ \\ 
  \hline
\end{tabular}
\caption{Means and standard deviations of approximation factor with low and high $L$}
 \label{Table: Llowhigh}
\end{table}

Finally, we found an interaction effect between the dichotomized $L$ and the homogeneity variable for $\Delta_I$ ($F = 12.971$, $p< 0.001$) but not for $\Delta$ ($F = 1.848$, $p > 0.1$). Figure \ref{fig:figInterLHomo} shows the interaction pattern of both variables.  

\begin{figure}[ht]
   \centering
\includegraphics[width=0.7\textwidth]{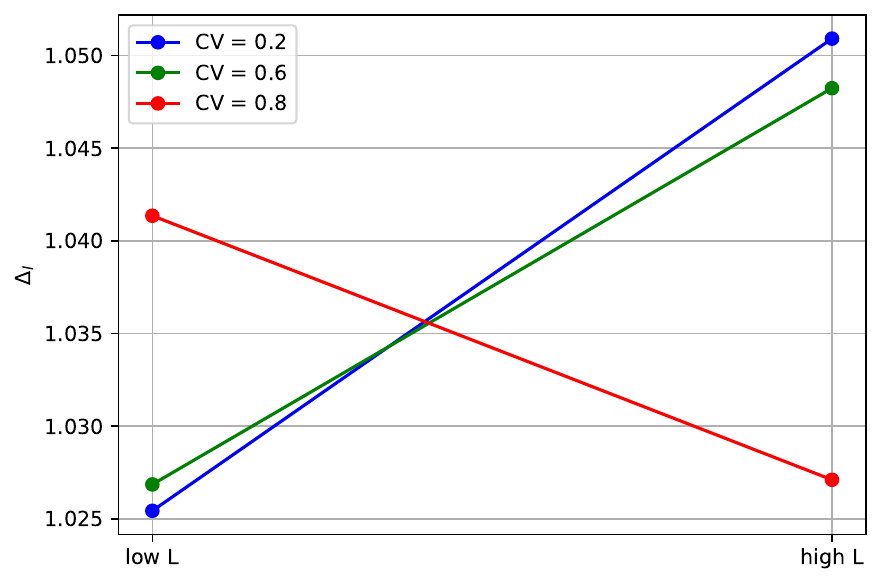}
    \caption{Interaction effect between $L$ and homogeneity factor for 
    $\Delta_I$.}
    \label{fig:figInterLHomo}
\end{figure}

\subsection{Experiment 2}
In this experiment, we include the variable density in the study.
Let $d$ be the level of discretization. Recall that it is defined as the maximum possible distance between the left endpoint $x_i$ of the leftmost segment and the right endpoint $y_j$ of the rightmost segment.  Let $S_1$ be the number of subintervals of length one contained within the union of all randomly generated segments $[x_i, y_i]$. Then the density $\rho$ is defined as $\rho = \frac{S_1}{d}$.

We generate a set of $3306$ scenarios with $m = -250$, $M = 250$ and $d = 500$. 
 In this experiment, we reduced the discretization level $d$ to compensate for the computational limitations of the Gurobi™ solver, as evidenced by the execution times reported later.
 To systematically evaluate the effect of the length $L$, we fixed the base station at coordinates $(X_b,Y_b) = (250,500)$ and randomly assigned $L$ from three predetermined levels: low $L$ with $L \in [L_{min}, L_{min} +15]$ ($1076$ cases), medium $L$ with $L \in [L_{min} +15, L_{min} +120]$ ($1143$ cases), and high $L$ with $L \in [L_{min} +120, L_{max}]$ ($1087$ cases). These ranges were selected through preliminary testing, as they were found to generate sufficient variation in the outputs.
As a third experimental factor, we reintroduced segment length homogeneity, measured identically to previous experiments but now discretized into two levels (high vs. medium/low). This simplification was justified by the similar behavioral patterns observed between medium and low homogeneity conditions in prior results (Figure \ref{fig:figInterLHomo}).
High homogeneity was set as $CV = 0.2$ ($1675$ cases) and low homogeneity as $CV = 0.8$ ($1631$ cases). 
 
In each scenario, the segments were generated randomly with normal distribution, with mean $m$ uniformly chosen with $m \in [10, 100]$. 
The density factor was divided into two levels: low density $\rho \approx 0.2$ ($2526$ cases)   and high density $\rho \approx 0.8$ ($780$ cases). 
Our segment generation algorithm iteratively adds segments until achieving approximately the required density. There are much lower cases of high density rather than low density due to problems in the execution times of the solver.
Then, we considered a $3 \times 2 \times 2$ factorial design in which the number of cases in each group was at least $120$. In addition, we studied the execution time, $T_G$, of the Gurobi™ solver. The maximum execution time of our algorithms, both the G2D-algorithm and the improved version, was less than $0.1$ seconds. However,
$T_G$ has a mean of $69.174$ seconds, $\sigma = 523.64$ and a maximum of $22438.413$ seconds. \par

As a first result, we found a similar pattern as in the previous experiment when comparing $\Delta$ and $\Delta_I$ as can be seen in Figure \ref{fig:apryL2}. Table \ref{Table: DeltaDeltai} shows the mean, the standard deviation, the maximum and the percentage of cases with an approximation factor equal to $1$ of both variables. As expected, the mean of $\Delta_I$ was significantly lower than the mean of $\Delta$ ($t-Student = 28.601$, $p< 0.001$). 

\begin{figure}[ht]
   \centering
\includegraphics[width=0.7\textwidth]{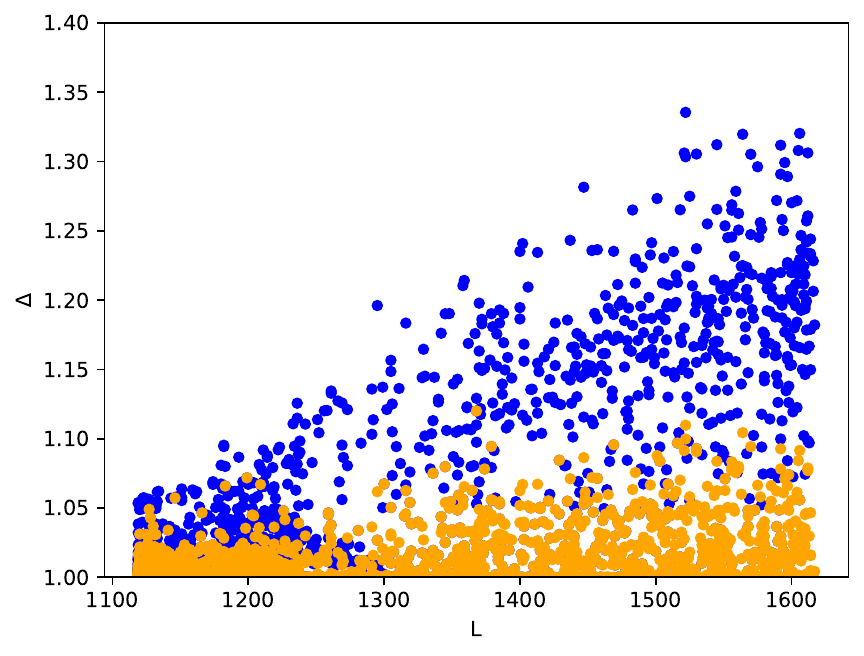}    \caption{Impact of maximum tour length $L$ on approximation factors for (1) the baseline G2D-algorithm (blue) and (2) the improved algorithm (orange).
}
    \label{fig:apryL2}
\end{figure}

 \begin{table}[ht]
\centering
\begin{tabular}{ | m{3cm} | m{2cm}| m{2cm} | m{2cm}| m{2cm}|} 
  \hline
 Approximation factor & Mean & $\sigma$ & Maximum & Exact \\ 
  \hline
  $\Delta$ &  $1.033$ &   $0.061$ & $1.335$ & $36.33\%$ \\ 
  \hline
  $\Delta_I$ &  $1.0072$ & $0.015$ &  $1.12$ & $63.25\%$\\ 
  \hline
\end{tabular}
\caption{Approximation factor results in the Experiment 2.}
\label{Table: DeltaDeltai}
\end{table}

Now, to analyze the effects of $L$, segment homogeneity and segment density on performance metrics $\Delta$, $\Delta_I$ and $T_G$, we conducted a  three-way complete factor ANOVA for each dependent variable. The experimental design comprised:
factor 1 ($L$): 3 levels (low/medium/high);
factor 2 (Homogeneity): 2 levels (CV=0.2/0.8); factor 3 (Density): 2 levels (sparse/dense).
For the approximation factor $\Delta$, we found significant effects from the $L$ factor ($F = 833.755$, $p< 0.001$) and the density factor ($F = 108.273$, $p < 0.001$). There was also an interaction effect between the factor $L$ and the density factor ($F = 127.480$, $p< 0.001$). A post hoc Tukey HSD test revealed statistically significant differences between the high $L$ and the low $L$ group (mean difference = $0.0756$, $p < 0.001$) and the medium $L$ group (mean difference = $0.0699$, $p < 0.001$). The difference between the medium $L$ group and the low $L$ group was smaller but still significant (mean difference = $0.0057$, $p< 0.05$).  For the density factor, Tukey's HSD test showed that the low density group was higher than the high density group (mean difference = $0.02049842$, $p< 0.001$). Figure \ref{fig:figInterDens} shows the pattern of interaction between both factors. The homogeneity factor has no significant effect on $\Delta$ ($F = 0.576$, $p > 0.05$).\par

\begin{figure}[ht]
   \centering
\includegraphics[width=0.7\textwidth]{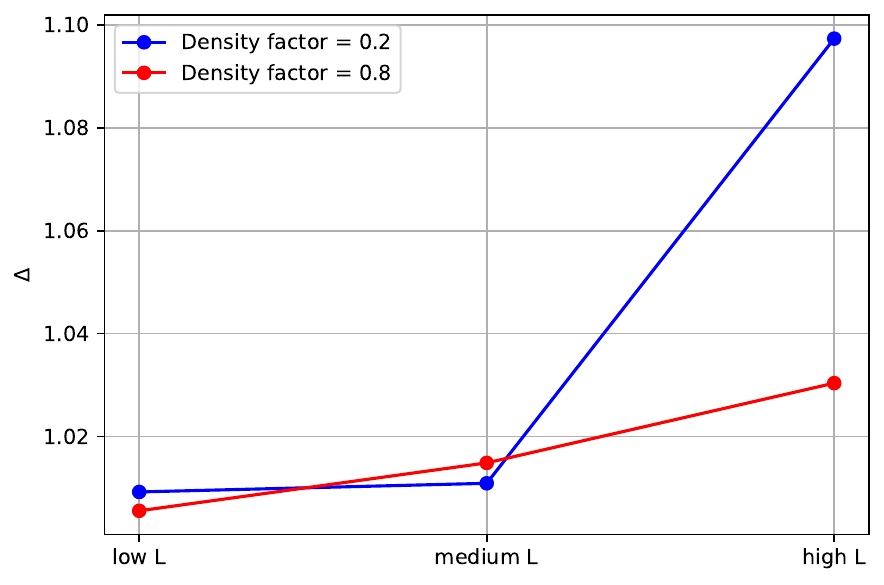}
    \caption{Interaction effect between $L$ and the density factor for the G2D algorithm.}
    \label{fig:figInterDens}
\end{figure}

For $\Delta_I$, only the factor $L$ showed significant effects on the approximation factor ($F = 349.040$, $p < 0.001$). A post-hoc Tukey’s HSD test indicated that the group of high $L$ has significantly greater mean of approximation factor than the group of low $L$  (mean difference = $0.0138$, $p < 0.001$), and also significantly greater than mean of medium $L$ (mean difference = $0.0129$, $p < 0.001$). However, the difference between medium $L$ and low $L$  was not statistically significant (mean difference = $0.0010$, $p > 0.05$). The density factor has no significant effects on $\Delta_I$ ($F =  2.8$, $p > 0.05$) as well as homogeneity factor ($F = 3.263$, $p > 0.05$). Consistent with our previous experimental results, we observed a statistically significant interaction effect between $L$ and segment homogeneity on algorithm performance. 
($F =12.949$, $p < 0.001$). Figure \ref{fig:figInterCVL} shows the interaction pattern between homogeneity and $L$.\par

\begin{figure}[ht]
   \centering
\includegraphics[width=0.7\textwidth]{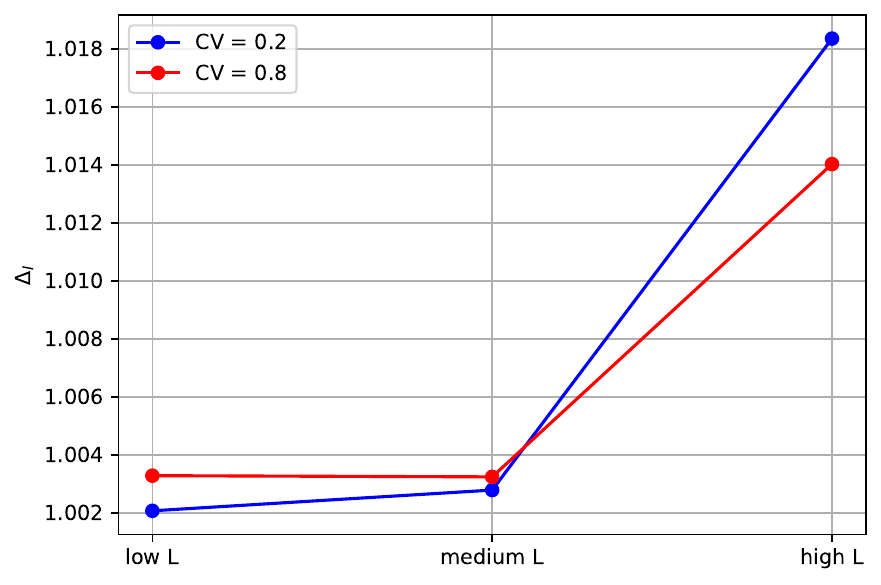}
    \caption{Interaction effect between $L$ and the homogeneity factor for the improved algorithm.}
    \label{fig:figInterCVL}
\end{figure}

The ANOVA test performed on the Gurobi™ time $T_G$ showed significant effects of the density factor ($F = 161.473$, $p < 0.001$)  as well as an interaction effect between the $L$ factor and the density factor ($F= 13.505$, $p< 0.001$). The homogeneity factor and the $L$ factor did not show any significant effect (homogeneity: $F = 0.028$. $L$-factor: $F = 2.780$). A post-hoc Tukey’s HSD test showed a statistically significant difference between the high density group and the low density group, with a mean difference of $264.82$ seconds ($p< 0.001$). In Figure \ref{fig:figInterGurobi} illustrates the interaction pattern between the factor $L$ and the density factor. We remark that the group with high density and low $L$ has a mean of $ 434.22$ seconds, with $\sigma = 1899.877$. \par

\begin{figure}[ht]
   \centering
\includegraphics[width=0.7\textwidth]{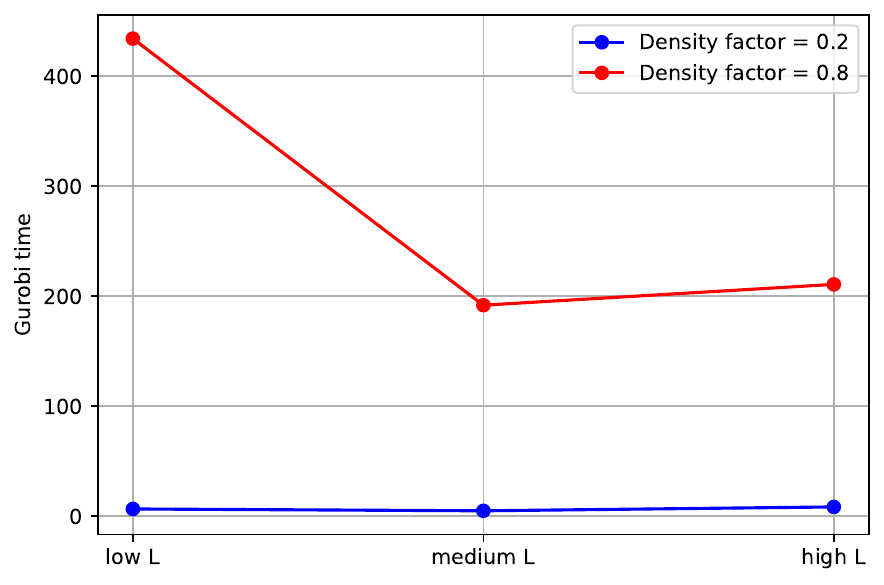}
    \caption{Interaction effect between $L$ and the density factor for the  Gurobi time in seconds.}
    \label{fig:figInterGurobi}
\end{figure}

Finally, we compare the approximation factor obtained empirically, $\Delta$, with the theoretical approximation factor exhibited in Theorem \ref{Theorem: Aproximation Theorem}. Recall that this last approximation factor is output sensitive. Let $D$ be the difference between the theoretical approximation factor and the empirical one.
A one-sample t-test revealed that the deviation metric $D$ was significantly greater than zero. 
($t = 43.246$, $p< 0.001$). The distribution of $D$ showed a mean of $0.233$ with a standard deviation of $0.310$. 

\subsection{Discussion of Experimental Findings}
Our experimental evaluation demonstrated consistently strong approximation performance under all test conditions.
Overall, the experiments showed low approximation factors with means no greater than $1.1$. In particular, the improved algorithm showed very low approximation factors with means lower than $1.05$, and extremely low approximation factor in the case of high homogeneity and low $L$ (with means around $1.03$), a common scenario in practical applications such as solar power plant inspections.  

The experimental results showed that the G2D-algorithm consistently outperforms its theoretical guarantees from Theorem \ref{Theorem: Aproximation Theorem}
(the difference mean was $0.2$, significantly greater than $0$). 
As another result, increasing the maximum tour length $L$ leads to a modest degradation in algorithm performance. However, this effect is significantly attenuated in the improved algorithm, as evidenced in
the second experiment, where only high $L$ has a difference with other groups.
The density factor is crucial in practical scenarios.
It was shown in the second experiment that high density improves the performance of G2D-algorithm, in particular in the cases of low $L$. 

Our benchmarking revealed significant computational limitations in the Gurobi™ solver's performance.
In particular, in cases of high density and low $L$, our computational experiments revealed significant limitations in the Gurobi™ solver's performance, reaching a maximum of time of several hours. Given that solar plants comprise hundreds of rows of panel segments, the Gurobi™ solver makes real-time deployment infeasible for field applications.
We observe that in such scenarios our algorithms have lower approximation factor. Due to computational constraints imposed by the Gurobi™ solver's exponential time complexity in high-density scenarios, we adopted a discretization level of $d=500$ in the second experiment.
However, $d=500$ provides sufficient resolution for algorithm comparison, homogeneity and density effects and
$L$-dependence analysis.

\par

 Finally, for illustrative purposes, Figure \ref{fig:exp4}
compares the solution provided by the G2D-algorithm with the solution obtained by digitizing the segments and applying the MILP formulation (solved with Gurobi™) for two instances, each consisting of only three segments. 
In the first example, shown on Figure\ref{fig:exp4} (a), the base station is located at the point $B=(0,-50)$ and the segments are $[-20,-13]$, $[-4, 10]$ and $[31, 60]$, on the line $y=0$.
G2D-algorithm produces a solution composed of two tours, $T_1 = (-20, 10)$ and $T_2 = (31, 60)$, with lengths, $l(T_1) = 134.84184321$ and $l(T_2) = L_{max} = 165.93276108$, respectively. The solution using Gurobi™  also consists of two tours, $T^*_1 = (-20, 32)$ and $T^*_2 = (32, 60)$, with lengths, $l(T^*_1) = 165.21493639$ and $l(T^*_2) = L^*_{max} =  165.46578508$, respectively. $L_{max} = 1.00282219L^*_{max}$. Note that Gurobi™'s solution spends time traversing the gap between the second and third segments, whereas the greedy solution returns to the base with sufficient remaining battery power. This is due to the fact that the set of tours used by our greedy procedure—optimal for a single drone—begins at the farthest point to be covered (on the right in this example) covering as much as possible. However, this approach does not yield the optimal solution when two drones are employed.

Something similar is happening in the second example, shown on Fig \ref{fig:exp4} (b). The base station is located at $B=(0,-50)$ and there are three segments, $[-4, 8]$, $[30, 38]$ and $[63, 79]$, on the line $y=0$. The G2D-algorithm again produces a solution composed of two tours: $T_1 = (-4, 8)$ and $T_2 = (30, 79)$, with lengths, $l(T_1) = 112.79570042$ and $l(T_2) = L_{max} =  200.80283422$. The Gurobi™'s solution  is also composed by two tours, $T^*_1 = (-4, 38)$ and $T^*_2 = (63, 79)$, with lengths, $l(T^*_1) = 154.96101869$ and $l(T^*_2) =L^*_{max} =  189.92340914$, respectively. $L_{max} = 1.05728322  L^*_{max}$.
Note that while the dashed tour (right) maximizes coverage from the farthest segment endpoint, this locally optimal strategy does not guarantee a globally optimal solution.
Another observation is that the optimal solution may require entering and exiting at interior points within a segment. These examples highlight the complexity of the optimization problem. However, in practice, the approximation algorithm is often favored in industry due to its simplicity in implementation and its ability to produce more geometrically intuitive routes.

\begin{figure}[ht]
\begin{tabular}{cc}

\includegraphics[width=0.4\textwidth, page=1]{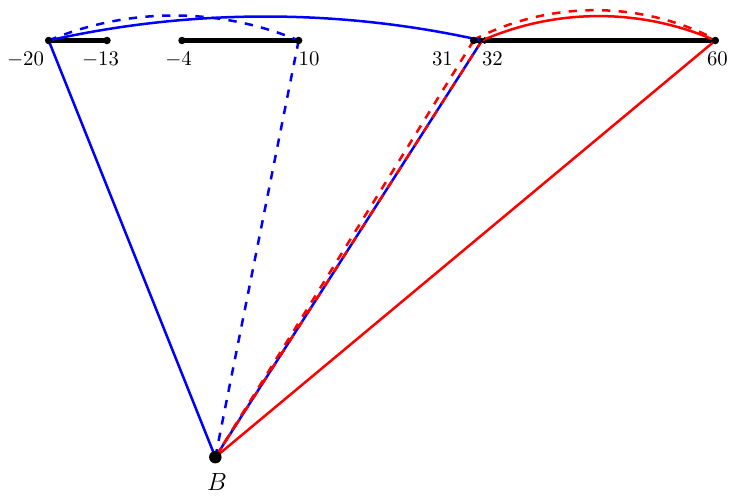}& \hspace{1cm}
\includegraphics[width=0.44\textwidth, page=2]{images/comparison_experiment.pdf} 
 \\
(a) & (b)
\end{tabular}
\caption{Examples of solutions provided G2D-algorithm in comparison with optimal solution. Dashed lines correspond to the greedy algorithm while solid lines represent the Gurobi™'s solution.}
\label{fig:exp4}
\end{figure}

\section{Conclusions and future research}\label{sec:conclu}
In this work, we have studied the problem of minimizing the flight time of multiple drones tasked with inspecting a set of segments positioned along a straight line.
We have proved that the problem is strongly NP-hard even for two drones.
Additionally, we have presented several polynomial-time algorithms to compute an approximation for a team of drones; we were able to prove that the approximation algorithms perform well not only in theory—due to their low constant-factor approximation—but also in simulations. By considering a discrete version of the problem, we have proposed a mixed-integer linear programming (MILP) formulation, using its solution as an optimal reference for the problem. 
Our algorithms have proven their ability to efficiently produce optimal solutions in a wide range of randomized scenarios. The approximation factors found in practice were even better than the theoretical ones. 
Moreover, we have identified specific scenarios where our algorithms excel. In particular we have shown that our algorithms perform nicely in practical scenarios as high density, high homogeneity or low $L$. 
Thus, we can conclude that the scenario where multiple drones must visit points along a line has been thoroughly analyzed, providing valuable insights and optimal solutions for various configurations.
These algorithms could be valuable not only for inspections in Concentrated Solar Power Plants but also for other industrial applications where targets are aligned along a straight line, such as high-tension power cables, oil and gas pipelines, or railway maintenance.

For future work, it would be interesting to develop approximation algorithms for scenarios where the targets are situated in more complex configurations, such as parallel lines or curved paths. Finally, the inclusion of multiple base stations introduces new optimization challenges of interest to the industry. Tours can start and end at different bases, potentially leading to more efficient inspection routes and further reducing total flight times.

\section*{Declarations}

\begin{itemize}
\item Conflict of interest: The authors declare that they have no conflict of interest. 
\item Ethical approval: No applicable.
\end{itemize}

\bibliography{eurocg24}
\end{document}